\newtheorem{observation}{{\bf Observation}}
\renewenvironment{proof}{\noindent{\bf Proof.~}}{\hspace*{\fill}${\small\square}$\vspace{2mm}}
\renewcommand{\paragraph}[1]{\smallskip\noindent{\bf #1}\xspace}
\newcommand{\remove}[1]{}
\def\etal{\emph{et~al.}\xspace}
\definecolor{blue}{rgb}{0.274,0.392,0.666}
\definecolor{red}{rgb}{1,0.3,0.3}
\definecolor{green}{rgb}{0,0.588,0.509}
\newcommand{\red}[1]{{\color{red}{#1\xspace}}}
\newcommand{\calC}[1]{\ensuremath{{\cal C}^{#1}}\xspace}
\newcommand{\calT}[1]{\ensuremath{{\cal T}^{#1}}\xspace}
\DeclareMathOperator{\emw}{emw}
\DeclareMathOperator{\tw}{tw}
\DeclareMathOperator{\msotwo}{MSO_2}
\DeclareMathOperator{\planar}{\textsc{planar}}
\DeclareMathOperator{\connected}{\textsc{conn}}
\DeclareMathOperator{\cplanar}{\textsc{c-planar}}
\newcommand{\titletext}{
Subexponential-Time and FPT Algorithms for\\ Embedded Flat Clustered Planarity}
\newcommand{\shorttitle}{Subexponential-Time and FPT Algorithms for C-Planarity Testing}
\titlerunning{\shorttitle\xspace}
\authorrunning{{Da Lozzo}~\etal}
\title{\titletext}
\author{Giordano {Da Lozzo}\inst{1} \and David Eppstein\inst{2} \and  Michael T. Goodrich\inst{2} \and  
Siddharth Gupta\inst{2}}
\institute{
Department of Computer Science, Roma Tre University, Rome, IT
\email{dalozzo@dia.uniroma3.it}
\and
Department of Computer Science, University of California, Irvine, USA
\\
\email{\{eppstein,goodrich,guptasid\}@uci.edu}
}
\begin{document}
\maketitle


\begin{abstract}
The {\sc C-Planarity} problem asks for a drawing of a \emph{\mbox{clustered} graph}, i.e., a
graph whose vertices belong to properly nested clusters, in which each cluster is represented by a simple closed region with no edge-edge 
crossings, no region-region crossings, and no unnecessary edge-region crossings. 
We study \mbox{\sc C-Planarity} for \emph{embedded flat clustered graphs}, 
graphs with a fixed combinatorial embedding whose clusters partition the vertex set.
Our main result is a subexponential-time algorithm to test \mbox{\sc C-Planarity} for 
these graphs when their face size is bounded.
Furthermore, we consider a variation of the notion of {\em embedded tree decomposition} in which,
for each face, including the outer face, there is a bag that contains every vertex
of the face. We show that \mbox{\sc C-Planarity} is fixed-parameter tractable with
the embedded-width of the underlying graph and the number of disconnected clusters as parameters.
\end{abstract}

\section{Introduction}\label{se:intro}
\begin{wrapfigure}[9]{R}{.31\textwidth}
    \centering
    \vspace{-11mm}
    \includegraphics[page=1, width=.31\textwidth]{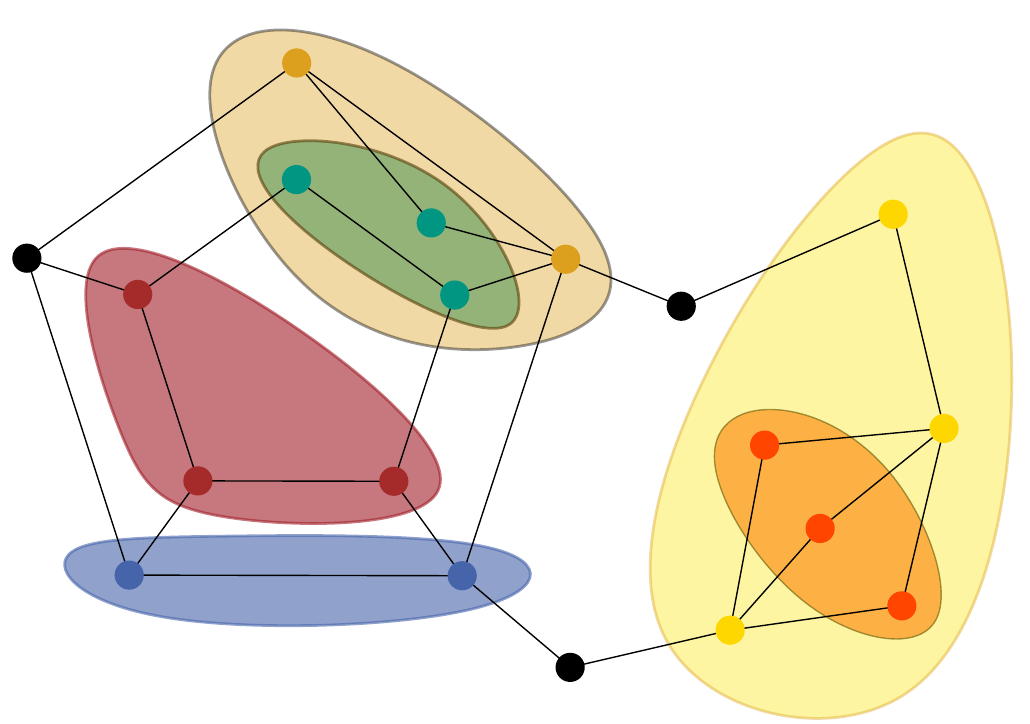}
    \vspace{-6mm}
    \caption{A c-planar drawing}\label{fig:c-planar}
\end{wrapfigure}
A \emph{clustered graph} (or \emph{c-graph}) is a pair $\calC{}(G,\calT{})$ with \emph{underlying graph} $G$ and \emph{inclusion tree} $\calT{}$, i.e., a rooted tree whose leaves are the vertices of $G$. Each internal node $\mu$ of $\calT{}$ represents a cluster of vertices of~$G$ (its leaf descendants) which induces a subgraph $G(\mu)$.
A~\emph{c-planar drawing} of $\calC{}(G,\calT{})$ (Fig.~\ref{fig:c-planar}) consists of a drawing of $G$ and of a representation of each cluster $\mu$ as a {\em simple closed region} $R(\mu)$, i.e., a region homeomorphic to a closed disc, such that:
\begin{inparaenum}[(1)]
\item Each region $R(\mu)$ contains the drawing of $G(\mu)$.
\item  For every two clusters $\mu,\nu \in \calT{}$, \mbox{$R(\nu) \subseteq R(\mu)$} if and only if $\nu$ is a descendant of $\mu$ in $\calT{}$.
\item No two edges cross.
\item No edge crosses any region boundary more than once.
\item No two region boundaries intersect.
\end{inparaenum}


An interesting and challenging line of research in graph drawing concerns the computational complexity of the \mbox{\sc C-Planarity} problem, which asks to test the existence of a c-planar drawing of a c-graph.
This problem is notoriously difficult, particularly when (as in Fig.~\ref{fig:c-planar}) clusters may be disconnected, faces may have unbounded size,
and the cluster hierarchy may have multiple nested levels.
No known subexponential-time algorithm solves the (general) \mbox{\sc C-Planarity} problem, and
it is unknown whether it is NP-complete, although the related problem of 
 splitting as few clusters as possible to make a c-graph c-planar was proved NP-hard~\cite{Angelini2010}.
Thus, there is considerable interest in 
subexponential-time, slice-wise polynomial, and fixed-parameter tractable algorithms, besides polynomial-time algorithms for special cases
of {\sc C-Planarity}.


{\sc C-Planarity} was introduced
by Feng, Cohen, and Eades~\cite{fce-pcg-95},
who solved it in quadratic time for the \emph{c-connected} case
when every cluster induces a connected subgraph.
Similar  results were given by 
Lengauer~\cite{Lengauer:1989} using different terminology.
Dahlhaus~\cite{Dahlhaus1998} claimed a linear-time algorithm for c-connected
{\sc C-Planarity} (with some details later provided by
Cortese \etal~\cite{cdfpp-cccg-06}).
\mbox{Goodrich \etal~\cite{Goodrich2006}} gave a cubic-time algorithm for  disconnected clusters that 
satisfy an ``extroverted'' property,
and Gutwenger {\it  et al.}~\cite{Gutwenger2002} provided a polynomial-time
algorithm for ``almost'' c-connected inputs.
Cornelsen and Wagner showed polynomiality for completely connected c-graphs, i.e., c-graphs for which
not only every cluster but also the complement of each cluster is connected~\cite{CornelsenW06}.
FPT algorithms have also been investigated~\cite{BlasiusR16,ck-ssscp-12}. For additional special cases
, see, e.g.,~\cite{ad-cpwp-16,addf-sprepg-17,AngeliniLBFPR15,AthenstadtC17,cgjkm-cmcps-08,DidimoGL08}. 

A c-graph is \emph{flat} when no non-trivial cluster is a subset of another, so $\calT{}$ has only three levels: the root, the clusters, and the leaves.
Flat  {\sc C-Planarity} can be solved 
in polynomial time
for embedded c-graphs with at most $5$ vertices per face~\cite{df-ectefgsf-13,FulekKMP15}
or at most two vertices of each cluster per face~\cite{cdfk-atcpefcg-14}, 
for embedded c-graphs in which each cluster induces a subgraph with at most two connected components~\cite{JelinekJKL08},
and
for c-graphs with two clusters~\cite{b-dppIIItcep-98,FulekKMP15,hn-sat2pepg-14} or three clusters~\cite{Akitaya17}.
At the other end of the size spectrum, 
Jel{\'\i}nkov{\'a} \etal~\cite{Jelinkova2008} provide efficient
algorithms for $3$-connected flat c-graphs when
each cluster has at most $3$ vertices.
Fulek~\cite{Fulek16} 
speculates that {\sc C-Planarity} could be
solvable in subexponential time for more general embedded flat c-graphs.

\clearpage
\paragraph{New Results.}
In this paper, we provide subexponential-time and 
fixed-parameter tractable algorithms for broad classes of c-graphs.
\mbox{We show the following results:}

\begin{itemize}[$\diamond$]
\item{\sc C-Planarity} can 
be solved in subexponential time for embedded flat \mbox{c-graphs} with bounded face size ({Section}~\ref{se:exponential}).
\item{\sc C-Planarity} is fixed-parameter tractable for embedded flat c-graphs with embedded-width and number of disconnected clusters as parameters \mbox{({Section}~\ref{se:courcelle}).} 
\end{itemize}

Our first result uses divide-and-conquer with a large but subexponential branching factor. It
exploits cycle separators in planar graphs and a concise representation 
of the connectivity of each cluster in a c-planar drawing.
This method also leads to an $\mathsf{XP}$ algorithm for \emph{generalized $h$-simply nested graphs}, which extend simply-nested graphs with bounded face size (Section~\ref{se:simplynested}). Recall that, $\mathsf{XP}$ (short for \emph{slice-wise polynomial}) is the class of parameterized problems with input size $n$ and parameter $k$ than can be solved in $O(n^{f(k)})$ time, where $f$ is a computable function.

We obtain our second result by expressing 
c-planarity in extended monadic second-order 
logic for embedded flat c-graphs and applying Courcelle's~Theorem.
The graphs to which this result applies, with bounded treewidth and bounded face size, include 
the \emph{nested triangles graphs}, a standard family of examples that are hard for many graph drawing tasks, 
the {\em dual graphs of bounded-treewidth bounded-degree plane graphs}~\cite{BouchitteMT01}, and the \emph{buckytubes},
 graphs formed from a planar hexagonal lattice wrapped to form a cylinder of bounded diameter. 
 

\section{Definitions and Preliminaries}\label{se:prel}

The graphs considered in this 
paper are finite, simple, and connected. A graph is \emph{planar} if it admits a drawing in the plane without edge crossings.
A \emph{combinatorial embedding} is an equivalence class of planar drawings, where two drawings of a graph are \emph{equivalent} if they determine the same \emph{rotation} at each vertex, i.e, the same circular orderings for the edges around each vertex.
A planar drawing partitions the plane into topologically connected regions, called \emph{faces}. 
The bounded faces are the \emph{inner faces}, while
the unbounded face is the \emph{outer face}.
A combinatorial embedding together with a choice for the outer face defines a \emph{planar embedding}.
An \emph{embedded graph} (\emph{plane graph}) is a planar graph with a fixed combinatorial embedding (fixed planar embedding).
%
The \emph{length} of a face $f$ is the number of occurrences of edges encountered in a traversal of $f$.
The \emph{maximum face size} of an embedded graph is the length of its largest face.

A graph is \emph{connected} if it contains a path between any
two vertices.  A \emph{cut-vertex} is a vertex whose removal
disconnects the graph.  A \emph{separation pair} is a pair of
vertices whose removal disconnects the graph.  A connected graph is
\emph{$2$-connected} if it contains at least $3$ vertices and does not have a cut-vertex, and a $2$-connected
graph is \emph{$3$-connected} if it contains at least $4$ vertices and does not have a separation pair. The \emph{blocks} of a graph are its maximal $2$-connected subgraphs. Any (subdivision of a) $3$-connected planar graph admits a unique combinatorial embedding (up to a flip)~\cite{Whitney33}.

\paragraph{Tree-width and Embedded-width.}
A \emph{tree decomposition} of a graph $G$ is a tree $T$ whose nodes, called \emph{bags}, are labeled by subsets of vertices of $G$. For each vertex $v$ the bags containing $v$ must form a nonempty contiguous subtree of $T$, and for each edge $uv$ at least one bag must contain both $u$ and $v$. The \emph{width} of the decomposition is one less than the maximum cardinality of any bag, and the \emph{treewidth} $\tw(G)$ of $G$ is the minimum width of any of its tree decompositions. 

Recently, Borradaile \etal~\cite{BELW} developed a variant of treewidth, specialized for plane graphs, called embedded-width. According to their definitions, a tree decomposition \emph{respects} an embedding of a plane graph $G$ if, for every inner face $f$ of $G$, at least one bag contains all the vertices of $f$. They define the \emph{embedded-width} $\emw(G)$ of $G$ to be the minimum width of a tree decomposition that respects the embedding of $G$. 
%
%
We will use the following result~\cite{BELW}.

\begin{theorem}[\cite{BELW}, Theorem~2]\label{th:embedded}
If $G$ is a plane graph where every inner face has length at most $\ell$, then
\mbox{$\emw(G) \leq (\tw(G) + 2) \cdot \ell - 1$}.
\end{theorem}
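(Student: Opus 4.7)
The plan is to start from any tree decomposition $(T,\{B_t\})$ of $G$ realizing $\tw(G)=w$, so that $|B_t|\le w+1$, and then modify it into a tree decomposition that respects the embedding of $G$ while keeping bag sizes below $(w+2)\ell$. The whole proof rests on a structural lemma and a charging argument; most of the work lies in the charging step.

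The first step is a structural observation: for every inner face $f$ of $G$, the set $T_f:=\{t\in V(T) : B_t\cap V(f)\neq\emptyset\}$ is a connected subtree of $T$. Indeed, any two consecutive vertices along $\partial f$ share an edge and therefore some bag, so the subtrees $T_v$ (for $v\in V(f)$) form a cyclic chain of pairwise overlapping subtrees whose union is connected. This is exactly the slack needed to gather $V(f)$ into a single bag without destroying the subtree property for individual vertices.

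The second step performs the gathering. I would root $T$ arbitrarily and pick, for each inner face $f$, a representative node $r_f\in T_f$ (say the one closest to the root). For each vertex $v\in V(f)$ I would then add $v$ to every bag on the unique path inside $T_f$ from $T_v$ to $r_f$. Because this path lies entirely in $T_f$ and joins the existing subtree $T_v$, the subtree property for $v$ is preserved, and after all such augmentations the new bag $B'_{r_f}$ contains all of $V(f)$, making the decomposition respect the embedding.

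The main obstacle is the size bound. A vertex $v$ ends up added to $B_t$ only when some face $f\ni v$ has its representative $r_f$ and an occurrence of $v$ on opposite sides of $t$ in the rooted $T$, so that the boundary of $f$ ``crosses'' the separator $B_t$. A planarity argument should then bound the number of faces that can cross a separator of size $|B_t|\le w+1$ by roughly $|B_t|+1\le w+2$: each crossing face must spend at least one distinct vertex inside $B_t$ (by the subtree-overlap structure of $T_f$), and an Euler-type count in the plane prevents more than $|B_t|+1$ faces from being in this ``unfinished'' state simultaneously. Since each of these at-most $w+2$ crossing faces contributes at most $\ell$ vertices to $B'_t$, one obtains $|B'_t|\le (w+2)\ell$, yielding $\emw(G)\le(w+2)\ell-1$. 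Making the ``$|B_t|+1$ crossing faces'' count rigorous---in particular, pinning down the constant $2$ in $w+2$ via the plane embedding---is the technically delicate point where planarity is truly used.
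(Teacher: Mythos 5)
First, a point of reference: the paper does not prove this statement at all --- it is imported verbatim from Borradaile \etal~\cite{BELW} (their Theorem~2) --- so your attempt has to stand on its own, and as written it does not. Your structural lemma (that $T_f=\bigcup_{v\in V(f)}T_v$ is a connected subtree) and the observation that your augmentation yields a valid, embedding-respecting tree decomposition are both fine; the gap is the charging step, and it is not merely ``technically delicate'' but false for the construction you describe. The hoped-for claim that each face routed through a node $t$ spends a \emph{distinct} vertex of $B_t$ fails because arbitrarily many faces can meet $B_t$ in one shared high-degree vertex, and with your choice of representative ($r_f$ = node of $T_f$ closest to the root) all of them can simultaneously route through the same bags. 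Concretely, take the wheel: hub $h$ joined to a cycle $v_1,\dots,v_{n-1}$, with the width-$3$ path decomposition whose bags are $X_i=\{h,v_1,v_i,v_{i+1}\}$, $i=2,\dots,n-2$, rooted at $X_2$. Here $T_h$ (and $T_{v_1}$) is the entire tree, so $T_f$ is the entire tree and $r_f$ is the root for \emph{every} triangular face; your augmentation then pushes every vertex $v_{j+1}$ up the path from $X_j$ to $X_2$, so the root bag ends up containing all $n$ vertices, while the theorem promises width at most $(3+2)\cdot 3-1=14$. Likewise, at an interior node $t$ of this decomposition, $\Theta(n)$ faces are ``crossing'' in your sense although $|B_t|=4$, so no Euler-type count can rescue the bound $|B_t|+1$ for this construction: the counterexample shows the construction itself, not just its analysis, must change.

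What is missing, then, is the actual content of the theorem: either a smarter gathering rule (at minimum, faces already contained in a bag must not be moved, and representatives must be chosen so that the faces charged to a single node interleave around $B_t$ in a way planarity can control) or a different construction altogether, together with a genuine planar counting argument pinning down the factor $\tw(G)+2$. Since the paper simply cites \cite{BELW} for this, you would need to supply that argument (or follow their proof) rather than leave it as a plausibility claim; as it stands the proposal proves only the easy part (that some embedding-respecting decomposition exists), not the width bound.
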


Borradaile \etal do not require the vertices of the outer face to be contained in a same bag.
In our applications, we modify this concept so that the tree decomposition also includes a bag containing the outer face, and we denote the minimum width of such a tree decomposition as $\overline{\emw}(G)$.  We have the following.

\begin{lemma}\label{le:newembedded}
If $G$ is a plane graph whose maximum face size (including the size of the outer face) is $\ell$, then
$\overline{\emw}(G) = O(\ell \cdot \tw(G))$.
\end{lemma}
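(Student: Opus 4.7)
The plan is to start from the tree decomposition guaranteed by Theorem~\ref{th:embedded} and enlarge it just enough to accommodate the outer face, while keeping the blow-up in width linear in $\ell$.

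First, since the hypothesis of Lemma~\ref{le:newembedded} implies that every inner face of $G$ has length at most $\ell$, I would invoke Theorem~\ref{th:embedded} to obtain a tree decomposition $T$ of $G$ whose width is at most $(\tw(G)+2)\cdot \ell - 1$ and that respects the embedding, i.e., for every inner face $f$ some bag contains $V(f)$. Let $F_o$ denote the set of vertices incident to the outer face of $G$; by hypothesis $|F_o| \le \ell$.

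Next, I would construct $T'$ from $T$ by adding every vertex in $F_o$ to every bag of $T$. I claim $T'$ is a tree decomposition of $G$. The edge-covering property is inherited from $T$: every edge of $G$ already has some bag of $T$ covering both its endpoints, and adding vertices to bags cannot destroy this. The subtree property also remains valid: vertices not in $F_o$ occur in exactly the same (connected, nonempty) set of bags as in $T$, while vertices in $F_o$ now appear in every bag of $T'$, which is trivially a subtree. Finally, $T'$ respects the embedding in the strengthened sense of $\overline{\emw}$: inner faces are already covered by $T$ and thus by $T'$, and the outer face is contained in every bag of $T'$, in particular in at least one.

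The width of $T'$ is bounded by the maximum bag size of $T$ increased by at most $|F_o| \le \ell$, giving
\[
\overline{\emw}(G) \;\le\; \big((\tw(G)+2)\cdot \ell - 1\big) + \ell \;=\; (\tw(G)+3)\cdot \ell - 1 \;=\; O(\ell \cdot \tw(G)),
\]
which yields the claimed bound. There is no real obstacle here beyond observing that padding every bag with a single set of at most $\ell$ vertices preserves the tree-decomposition axioms; the argument is essentially a matter of verifying that the outer face can be absorbed ``globally'' at an additive cost of $\ell$ in the width.
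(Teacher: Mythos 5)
Your proof is correct, but it reaches the bound by a different route than the paper. Both arguments use Theorem~\ref{th:embedded} as the key ingredient; the difference is that you modify the \emph{decomposition} while the paper modifies the \emph{graph}. The paper augments $G$ to a graph $G'$ by embedding $G$ in the interior of a triangle $\Delta$ and identifying one outer-face vertex of $G$ with a vertex of $\Delta$, so that the former outer face of $G$ becomes (part of) an inner face of $G'$; it then observes $\tw(G') = \max(\tw(G),2)$ and that the maximum face size of $G'$ is $O(\ell)$, and concludes $\overline{\emw}(G) \leq \emw(G') \leq (\tw(G')+2)\cdot \ell' - 1 = O(\ell \cdot \tw(G))$. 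You instead apply Theorem~\ref{th:embedded} to $G$ itself and pad every bag with the at most $\ell$ vertices of the outer face (note this uses, correctly, that a face of length $\ell$ has at most $\ell$ distinct incident vertices). Your version is somewhat more elementary and self-contained: there is no auxiliary graph whose treewidth and face sizes must be checked, and you obtain the explicit bound $(\tw(G)+3)\cdot \ell - 1$ rather than only an asymptotic one. What the paper's construction buys is that it stays entirely within the framework of embedding-respecting decompositions of a supergraph, invoking Theorem~\ref{th:embedded} as a black box without ever manipulating a tree decomposition directly, which fits the style of the surrounding reductions (e.g.\ Lemma~\ref{lem:3connected}). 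Either argument establishes Lemma~\ref{le:newembedded}.
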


\begin{proof}
To prove the statement, we can proceed as follows. 

We augment $G$ to a graph $G'$, 
by embedding $G$ in the interior of a triangle~$\Delta$ and by {identifying} one of the vertices of the outer face of $G$ with a vertex of~$\Delta$. 
Clearly, \mbox{$tw(G') = \max(\tw(G),2)$} and $G'$ has maximum face size $\ell'=O(\ell)$. 
Also, we have that $\overline{\emw}(G) \leq {\emw}(G')$, since $G \subseteq G'$ and since all the vertices incident to the same face in $G$ are also incident to the same face in $G'$.
Thus, the statement follows from the fact that, by Theorem~\ref{th:embedded}, \mbox{$\emw(G') \leq (\tw(G') + 2) \cdot \ell' - 1$}.\end{proof}

\paragraph{Clustered Planarity.}
Recall that, in a c-graph $\calC{}(G,\calT{})$, each internal node $\mu$ of $\calT{}$ corresponds to the set $V(\mu)$ of vertices of $G$ at leaves of the subtree of $\calT{}$ rooted at $\mu$.
Set $V(\mu)$ induces the subgraph $G(\mu)$ of $G$. We call the internal nodes
other than the root \emph{clusters}.
A cluster $\mu$ is \emph{connected} if $G(\mu)$ is connected and \emph{disconnected} otherwise.
A c-graph $\calC{}(G,\calT{})$ is \emph{c-connected} if every cluster is connected.

A c-graph is \emph{c-planar} if it admits a c-planar drawing. Two c-graphs $\calC{}(G,\calT{})$ and $\calC{'}(G',\calT{'})$ are \emph{equivalent} if both are c-planar or neither is.
If the root of $\calT{}$ has leaf children, enclosing each leaf $v$ in a new singleton cluster produces an equivalent c-graph. Therefore, we can safely assume that each vertex belongs to a cluster. 
A c-graph is \emph{flat} if each leaf-to-root path in $\calT{}$ has exactly three nodes. The clusters of a flat c-graph form a partition of the vertex set.

An \emph{embedded c-graph} $\calC{}(G,\calT{})$ is a c-graph whose underlying graph has a fixed combinatorial embedding.  It is \emph{c-planar} if it admits a c-planar drawing that preserves the embedding of $G$. In what follows, we only deal with embedded flat c-graphs. Therefore, we will refer to such graphs simply as c-graphs.

We define the \emph{candidate saturating edges} of a c-graph $\calC{}(G,\calT{})$ as follows. 
For each face $f$ of $G$, let $G(f)$ be the closed walk composed of the vertices and edges of~$f$. 
For each cluster $\mu \in \calT{}$, consider the set ${\cal X}_\mu(f)$ of connected components of $G(f)$ induced by the vertices of $\mu$ and, for each component $\xi \in {\cal X}_\mu(f)$, assign a vertex of~$f$ belonging to $\xi$ as a reference vertex of $\xi$. 
We add an edge inside~$f$ between the reference vertices of any two components in ${\cal X}_\mu(f)$ if and only if such vertices belong to different connected components of $G(\mu)$; see Figs.~\ref{fig:candidates-a} and~\ref{fig:candidates-b}.
\begin{figure}[tb]
    \centering
  \subfloat{
    \includegraphics[page=1,width=.3\textwidth]{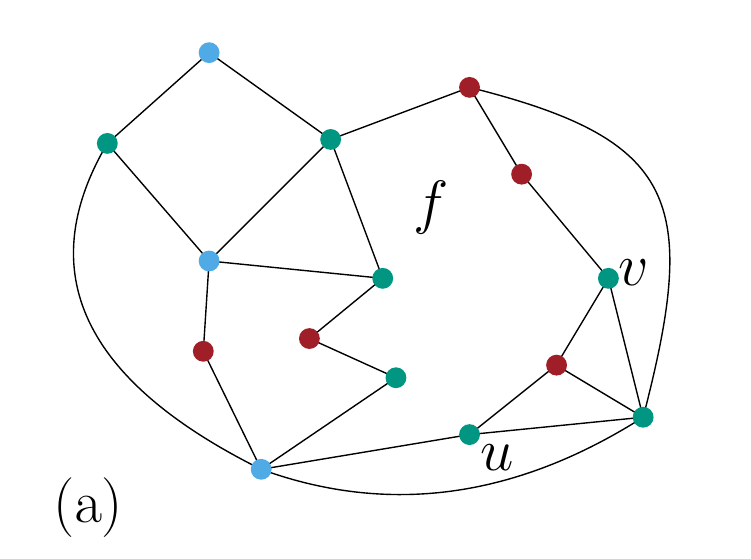}
    \label{fig:candidates-a}
    }
  \subfloat{
    \includegraphics[page=2,width=.3\textwidth]{saturatingEdges}
    \label{fig:candidates-b}
    }
  \hfil
  \subfloat{
    \includegraphics[page=3,width=.3\textwidth]{saturatingEdges}
    \label{fig:candidates-c}
  }
  \caption{(a) An embedded flat c-graph $\calC{}(G,\calT{})$. 
  (b) A super c-graph of $\calC{}$ containing all the candidate saturating edges of $\calC{}$ (thick and colored curves); since vertices $u$ and $v$ belong to different components of $X_\mu(f)$ but to the same connected component of $G(\mu)$,  edge $(u,v)$ is not a candidate saturating edge.
  (c) A super c-graph of $\calC{}$ satisfying Condition~(\ref{con:3}) of Theorem~\ref{th:characterization}; regions enclosing vertices of each cluster are shaded.
  }
  \label{fig:candidates}
\end{figure}
%
A c-graph obtained from $\calC{}(G,\calT{})$ by adding to $\calC{}$ a subset $E^+$ of its candidate saturating edges is a \emph{super c-graph} of~$\calC{}$.

Di~Battista and Frati~\cite{df-ectefgsf-13} gave the following characterization.

\begin{theorem}[\cite{df-ectefgsf-13}, Theorem~1]\label{th:characterization}
A c-graph $\calC{}(G,\calT{})$ is c-planar if and only if: 
\begin{enumerate}[(i)]
\item\label{con:1} $G$ is planar; 
\item\label{con:2} there exists a face $f$ in $G$ such that when $f$ is chosen as the outer face for $G$ no cycle composed of vertices of the same cluster encloses a vertex of a different cluster in its interior; and
\item\label{con:3} 
there exists a super c-graph $\calC{'}(G',\calT{})$ of $\calC{}$ such that $G'$ is planar and $\calC{'}$ is c-connected (see Fig.~\ref{fig:candidates-c}).
\end{enumerate}
\end{theorem}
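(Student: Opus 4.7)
My plan is to prove the two directions of the biconditional separately. For the necessary direction I argue topologically, using the disjointness of cluster regions in any c-planar drawing. For the sufficient direction I reduce to the classical characterization of c-planarity for c-connected flat c-graphs due to Feng, Cohen, and Eades~\cite{fce-pcg-95}.

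Forward direction. Assume that $\calC{}(G,\calT{})$ admits a c-planar drawing $\Gamma$. Planarity of $G$ gives (i). For (ii), choose $f$ to be the outer face of $\Gamma$, and suppose for contradiction that a cycle $C \subseteq G(\mu)$ encloses a vertex $v$ of a cluster $\nu \neq \mu$. Since $\calC{}$ is flat, $R(\mu)$ and $R(\nu)$ are disjoint simple closed disks, with $C \subseteq R(\mu)$ and $v \in R(\nu)$. Letting $D$ denote the bounded component of $\mathbb{R}^2 \setminus C$, connectedness of $R(\nu)$ (together with $R(\nu) \cap C = \emptyset$ and $v \in D \cap R(\nu)$) forces $R(\nu) \subseteq D$; a standard Jordan-curve argument then shows that $\partial R(\mu)$ must surround $D$, so $R(\mu) \supseteq \overline{D} \supseteq R(\nu)$, contradicting disjointness. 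For (iii), I iteratively augment $\Gamma$ inside faces of $G$ with candidate saturating edges: whenever two connected components of $G(\mu)$ share the boundary of a face $f$, a chord inside $f$ between suitable reference vertices can be added without crossings, as both components sit inside $R(\mu)$ in $\Gamma$. The resulting super c-graph $\calC{'}(G',\calT{})$ has planar $G'$ and is c-connected.

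Backward direction. Let $\calC{'}(G',\calT{})$ be the c-connected super c-graph given by (iii); since its saturating edges lie inside faces of $G$, the embedding of $G'$ extends that of $G$. It suffices to produce a c-planar drawing of $\calC{'}$, since deleting the saturating edges then yields a c-planar drawing of $\calC{}$. Choose $f$ from (ii) as outer face of $G'$. By the Feng--Cohen--Eades characterization, $\calC{'}$ is c-planar in this embedding if and only if, for every cluster $\mu$, no cycle of $G'(\mu)$ encloses a vertex of a different cluster. I transfer this property from (ii) by a rerouting argument: any hypothetical cycle $C'$ of $G'(\mu)$ enclosing a non-$\mu$ vertex $v$ can be turned into a closed walk, and hence a cycle, of $G(\mu)$ still enclosing $v$ by successively replacing each saturating edge on $C'$ with a $G'(\mu)$-path avoiding that edge, which exists by c-connectedness of $\calC{'}$; this would contradict (ii). The main obstacle I anticipate is precisely this rerouting step: the replacement paths can detour through other faces of $G$, and one must carefully argue that the enclosure of $v$ is preserved as the rerouting is carried out. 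The forward direction, in comparison, rests on routine Jordan-curve-style topology for disjoint simple closed disks.
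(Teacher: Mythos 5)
First, note that the paper does not prove this statement at all: it is imported verbatim from Di~Battista and Frati~\cite{df-ectefgsf-13} (their Theorem~1), so your attempt has to stand on its own, and it has a genuine gap in the backward direction. You fix an embedding of $G'$ extending that of $G$, invoke the Feng--Cohen--Eades condition~\cite{fce-pcg-95} for the c-connected graph $\calC{'}$, and then claim that condition~(ii) rules out a cycle of $G'(\mu)$ enclosing a foreign vertex by rerouting saturating edges onto $G'(\mu)$-paths until a cycle of $G(\mu)$ remains. This cannot work in exactly the interesting case: if $G(\mu)$ is disconnected (so every cycle of $G'(\mu)$ through several components must use saturating edges), no cycle of $G(\mu)$ witnessing the contradiction exists, and the rerouting can never eliminate all saturating edges. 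Worse, the statement you are trying to transfer is simply false for an arbitrary $\calC{'}$ satisfying~(iii) with an arbitrary embedding extending that of $G$. Take $G$ a hexagon with vertices $a,x,b,c,y,d$ in this cyclic order, clusters $\mu=\{a,b,c,d\}$ and $\nu=\{x,y\}$; condition~(ii) holds vacuously since $G$ has no monochromatic cycle. The saturator $E^+=\{(a,b),(c,d),(x,y)\}$ makes both clusters connected and $G\cup E^+$ is planar, so~(iii) holds; but in the embedding that places $(a,b)$ and $(c,d)$ in the outer face of the hexagon and $(x,y)$ in the inner face, the cycle $a\,b\,c\,d$ of $G'(\mu)$ encloses both $x$ and $y$. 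So the sufficiency proof cannot just verify the no-enclosure property for a given $\calC{'}$ and a given embedding; the real content of the direction is showing that conditions~(i)--(iii) allow one to \emph{choose} a saturator and an embedding of the saturating edges inside the faces of $G$ for which the property holds (in the example, drawing $(a,b),(c,d)$ inside and $(x,y)$ outside works). Relatedly, your assertion that ``the embedding of $G'$ extends that of $G$'' does not follow from~(iii) as stated, which only asks that $G'$ be planar as an abstract graph -- this is precisely the distinction the paper later has to work around via $3$-connectivity in its MSO$_2$ section.

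The forward direction also leans on a false intermediate claim for~(iii). The rule ``whenever two connected components of $G(\mu)$ share the boundary of a face $f$, a chord inside $f$ can be added without crossings, as both components sit inside $R(\mu)$'' fails as soon as two clusters alternate around a face: both pairs sit inside their (disjoint) regions, yet any two chords realizing both connections inside $f$ must cross, and a greedy face-by-face augmentation can both create crossings and add chords that are not needed. The correct argument extracts the saturator from the drawing $\Gamma$ itself: since every edge crosses each region boundary at most once, the interior of $R(\mu)$ meets only $G(\mu)$ and portions of edges incident to $\mu$, and one shows that the components of $G(\mu)$ can be joined by pairwise noncrossing curves inside $R(\mu)$ (hence inside single faces of $G$, with endpoints slidable to reference vertices); disjointness of the regions then gives planarity across clusters. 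That step -- why such curves always exist inside $R(\mu)$ despite the edge portions crossing it -- is the substantive part of necessity and is not supplied by ``both components sit inside $R(\mu)$.'' Your parts~(i) and~(ii) of the forward direction are fine.
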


Conditions~(\ref{con:1}) and~(\ref{con:2}) of Theorem~\ref{th:characterization} can be easily verified in linear time. Therefore, we can assume that any c-graph satisfies these conditions. Following~\cite{df-ectefgsf-13} we thus view the problem of testing c-planarity as one of testing Condition~(\ref{con:3}).

A \emph{cluster-separator} in a c-graph  $\calC{}(G,\calT{})$ is a cycle $\rho$ in $G$ for which some cluster $\mu \in \calT{}$ has vertices both in the interior and in the exterior of $\rho$ but with $V(\mu) \cap V(\rho) = \emptyset$. 
Condition~(\ref{con:3}) immediately yields the following observation. 

\begin{observation}\label{obs:safe-cycle} A c-graph that has a cluster-separator
is not c-planar.
\end{observation}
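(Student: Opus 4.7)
The plan is to argue by contradiction from Condition~(\ref{con:3}) of Theorem~\ref{th:characterization}. Suppose $\calC{}(G,\calT{})$ has a cluster-separator $\rho$ with associated cluster $\mu$, and that $\calC{}$ is c-planar. Then there exists a super c-graph $\calC'(G',\calT{})$ of $\calC{}$ such that $G'$ is planar and $\calC'$ is c-connected. The first step is to observe that, since a super c-graph is obtained from $\calC{}$ by adding (candidate saturating) edges while preserving the embedding of $G$, the cycle $\rho$ is still a cycle of $G'$ embedded in the same way, and hence still separates the plane into an interior and an exterior region, each containing at least one vertex of $\mu$.

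The second step is to exploit c-connectivity. Since $G'(\mu)$ is connected and contains vertices both in the interior and in the exterior of $\rho$, there is a path $\pi$ in $G'(\mu)$ joining an interior vertex of $\mu$ to an exterior one. By the Jordan curve theorem, $\pi$ must meet $\rho$. However, by the definition of cluster-separator we have $V(\mu)\cap V(\rho)=\emptyset$, so no vertex of $\pi$ lies on $\rho$; therefore the intersection must be a crossing between an edge of $\pi$ and an edge of $\rho$, contradicting the planarity of the drawing of $G'$.

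The only subtlety to address is why the saturating edges added in passing from $G$ to $G'$ cannot already provide such a path without crossing $\rho$. This follows because each candidate saturating edge is drawn inside a single face of $G$, and every face of $G$ lies entirely in the interior of $\rho$ or entirely in its exterior (as $\rho$ is a cycle of edges of $G$). Hence no saturating edge joins an interior vertex to an exterior one, and the topological obstruction cannot be bypassed. The main—and essentially only—step is thus identifying the right topological invariant (separation by $\rho$) and combining it with the c-connectivity requirement from Theorem~\ref{th:characterization}.
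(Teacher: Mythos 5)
Your proof is correct and matches the reasoning the paper intends: the observation is stated as an immediate consequence of Condition~(\ref{con:3}) of Theorem~\ref{th:characterization}, and your key point---that every candidate saturating edge lies within a single face of $G$, so no edge of the super c-graph can join a vertex interior to $\rho$ with one exterior to it, making $G'(\mu)$ necessarily disconnected when $V(\mu)\cap V(\rho)=\emptyset$---is exactly the fact the paper itself highlights later when discussing why $\rho$ must be a cycle separator. The Jordan-curve step is a slightly redundant detour, but the argument stands.
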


In the next sections, it will be useful to only consider c-graphs which are at least $2$-connected (Section~\ref{se:exponential}) and $3$-connected (Section~\ref{se:courcelle}). 
The next lemma, conveniently stated in a stronger form\footnote{In Section~\ref{se:courcelle}, we exploit all the properties of the lemma. In Section~\ref{se:exponential}, we only exploit the existence of an equivalent $2$-connected c-graph with maximum face size~$\kappa=O(\ell)$.}, shows that this is not a loss of generality.



\begin{restatable}{lemma}{lemmathreeconnected}\label{lem:3connected}
Let $\calC{}(G,\calT{})$ be an $n$-vertex c-graph with maximum face size $\ell$. There exists an $O(n)$-time algorithm that constructs an equivalent c-graph $\calC{*}(G^*,\calT{*})$ with $|V(G^*)|=O(n){}$ such that:
\begin{inparaenum}
 \item  $G^*$ is $3$-connected,
 \item  the maximum face size $\kappa$ of $G^*$ is $O(\ell)$, and
 \item  the 
 c-graph $\calC{\diamond}(G^\diamond,\calT{\diamond})$ obtained by augmenting $\calC{*}(G^*,\calT{*})$ with all its candidate saturating edges is such that $\tw(G^\diamond) = O(\overline{\emw}(G))$.
 \end{inparaenum}
\end{restatable}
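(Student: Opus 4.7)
\begin{sketch}
The plan is to augment $G$ by a local ``wheel'' gadget inside each face, with new vertices assigned to existing clusters, so that the face-level structure of candidate saturating edges is preserved. For each face $f$ of $G$ with walk $u_1, \ldots, u_{\ell_f}$, I would introduce new vertices $w_1^f, \ldots, w_{\ell_f}^f$ arranged as a cycle $F_f$ inside $f$, add \emph{spoke} edges $u_i w_i^f$, and place each $w_i^f$ in the same cluster of $\calT{*}$ as $u_i$. This yields $|V(G^*)| = n + \sum_f \ell_f = O(n)$ in linear time, and each face $f$ splits into $\ell_f$ quadrilaterals plus one inner $\ell_f$-gon, giving $\kappa = O(\ell)$. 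A standard preliminary step, adding a singleton-cluster vertex in a face at each cut vertex of $G$ connected to representatives of the incident blocks, first makes $G$ $2$-connected while preserving c-planarity.

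For $3$-connectivity of $G^*$, note that every vertex has degree at least~$3$ in $G^*$, and for any pair $\{a,b\}$ of vertices, the cycle $F_f$ inside each face $f$ containing $a$ or $b$ provides a bypass: after deleting $a$ and $b$, the remaining frame vertices of $F_f$ still form a path (or cycle) connecting all other vertices of $V(f)$, so $G^* \setminus \{a,b\}$ is connected. A short case analysis over whether $a,b$ are original or frame vertices and whether they lie in the same face completes the argument.

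For c-planarity equivalence, the direction $\calC{*}$ c-planar $\Rightarrow$ $\calC{}$ c-planar follows by deleting the frames from a c-planar drawing of $\calC{*}$. The converse rests on a \emph{bijective correspondence of candidate saturating edges}: inside each gadget, every quadrilateral face has its two same-cluster pairs already connected by a spoke--cycle--spoke path (hence no candidate saturating edge appears there), while the inner $\ell_f$-gon reproduces the cluster pattern of $f$ under $u_i \leftrightarrow w_i^f$, producing one candidate saturating edge $w_i^f w_j^f$ for each candidate saturating edge $u_i u_j$ of $\calC{}$ in $f$, and with identical interleaving constraints on the inner face. Hence a super c-graph making $\calC{*}$ c-connected corresponds edge-for-edge to one making $\calC{}$ c-connected, and Theorem~\ref{th:characterization} yields equivalence in both directions.

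For the treewidth bound, I would start with a tree decomposition of $G$ of width $\overline{\emw}(G)$ containing, for each face $f$, a bag $B_f \supseteq V(f)$, and attach to $B_f$ a new leaf bag $B_f' := V(f) \cup V(F_f)$ of size at most $2\ell_f = O(\overline{\emw}(G))$ (using $\overline{\emw}(G) \geq \ell - 1$). Every edge of $G^\diamond$---original, spoke, frame cycle, or candidate saturating---has both endpoints in some $B_f'$ or in a bag of the original decomposition, so $\tw(G^\diamond) = O(\overline{\emw}(G))$. The main obstacle is the equivalence: it is essential that each $w_i^f$ be assigned to $u_i$'s cluster rather than to a fresh singleton, since otherwise non-adjacent same-cluster vertices of $f$ would become disconnected in $\calC{*}$ with no available candidate saturating edges to reconnect them, breaking the forward direction of the equivalence.
\end{sketch}
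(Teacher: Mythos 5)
Your route is genuinely different from the paper's. The paper first applies a per-edge gadget (two parallel length-two paths plus a subdivision vertex, all assigned to the cluster of one endpoint) to force degree at least three inside every block, and then a per-vertex, truncation-like gadget (subdivide every incident edge and join consecutive subdivision vertices around each vertex) whose purpose is to fix the rotation at every vertex, so that $3$-connectivity follows from minimum degree three together with uniqueness of the embedding (Whitney); the bound $\tw(G^\diamond)=O(\overline{\emw}(G))$ is then obtained by chaining Lemma~\ref{le:newembedded} with Theorem~\ref{th:embedded}. Your per-face frame gadget is a legitimate alternative, and your treewidth argument (attach to each face bag $B_f$ a leaf bag $V(f)\cup V(F_f)$ of size at most $2\ell_f=O(\overline{\emw}(G))$) is more direct than the paper's chain of inequalities. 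The invariant that makes either construction work is the one you identify: every new vertex shadows an original vertex and inherits its cluster, and every original face retains a face (your inner $\ell_f$-gon) carrying its full cyclic cluster pattern, so candidate saturating edges transfer one-for-one.

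There is, however, a genuine flaw in the preliminary $2$-connectification step as you state it. Adding a singleton-cluster vertex $z$ joined to arbitrary ``representatives of the incident blocks'' inserts a path inside a face of the \emph{fixed} embedding and can separate two vertices that must receive a saturating edge. Concretely, let $G$ consist of two triangles $v,a,p$ and $v,b,q$ glued at the cut vertex $v$, with $\{p,q\}$ the only disconnected cluster and all other vertices in singleton clusters. The c-graph is c-planar (add the saturating edge $pq$ in the outer face, whose walk is $v,a,p,v,b,q$), but inserting $z$ adjacent to the representatives $a\in B_1$ and $b\in B_2$ splits the outer face so that $p$ and $q$ no longer share any face; the modified instance then has no candidate saturating edge for that cluster and is not c-planar, so equivalence fails. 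Note that $a$ and $b$ are even neighbours of $v$, so merely requiring the representatives to be adjacent to the cut vertex does not rescue the step: you must take the two vertices consecutive with a single occurrence of $v$ on the face walk, so that the new path cuts off a region containing no original vertex other than $v$ (this is exactly the discipline the paper's edge gadget enforces by construction: no original vertex ever loses an incidence to an original face). A smaller, repairable omission: in your $3$-connectivity sketch, the case where the deleted pair $\{a,b\}$ is a separation pair of $G$ needs the observation that two edges at $a$ leading into different components of $G\setminus\{a,b\}$ and consecutive in the rotation at $a$ witness a face whose frame cycle reconnects those components; as written this case is only asserted. With the corrected $2$-connectification and that observation added, the rest of your argument (the bijection of candidate saturating edges and the treewidth bound) goes through.
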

\begin{proof}
\begin{figure}[tb]
     \centering
     \subfloat[\label{fi:t2}]{
     \includegraphics[page=2, width=.31\textwidth]{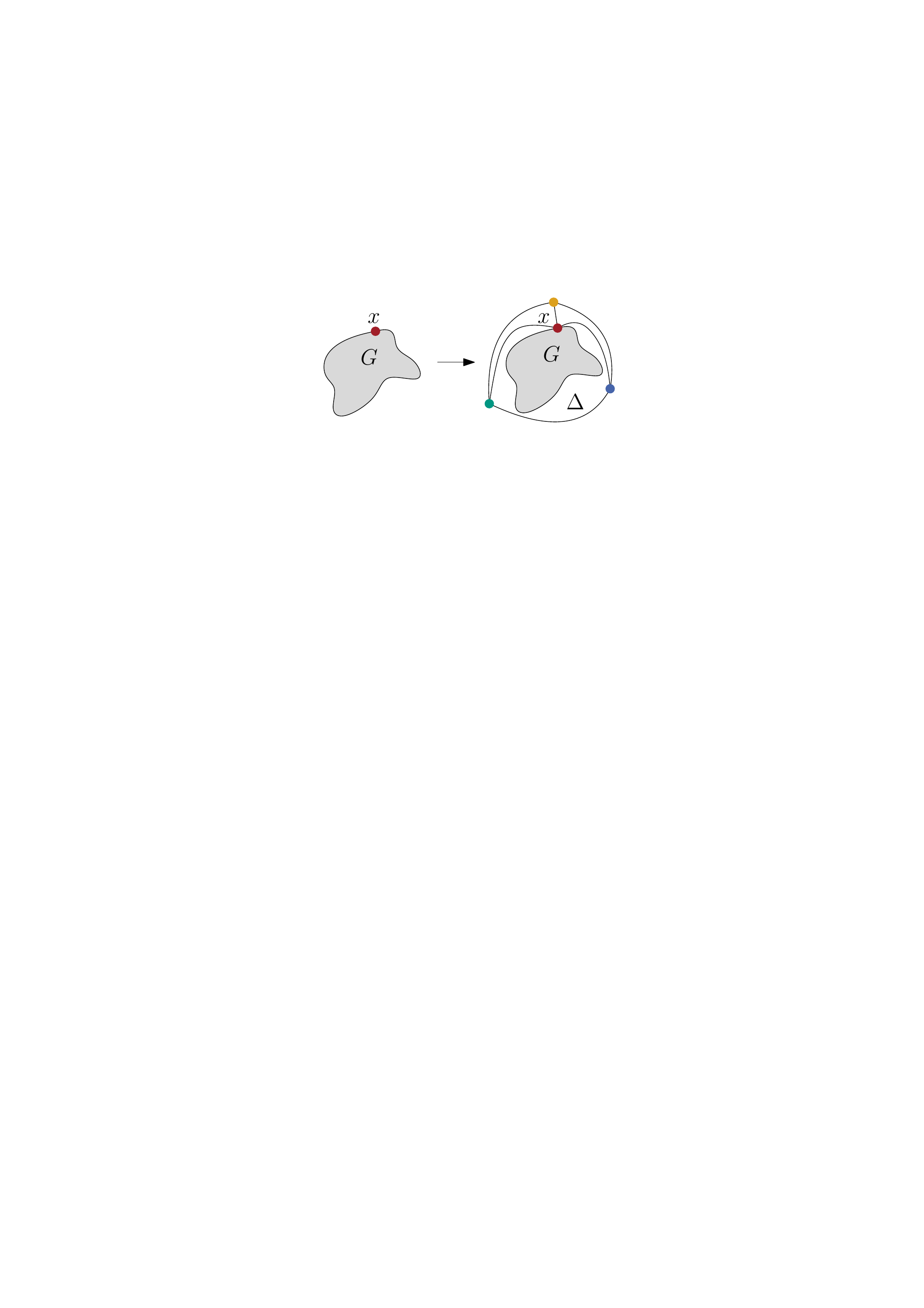}
     }\hfil
     \subfloat[\label{fi:t3}]{
     \includegraphics[page=3, width=.31\textwidth]{lemma-1_apx.pdf}
     }
     \caption{Transformations for the proof of Lemma~\ref{lem:3connected}.}\label{fig:transformations}
\end{figure}
To prove the statement, we can proceed as follows. 


First, we transform c-graph $\calC{}(G,\calT{})$ into an equivalent c-graph $\calC{'}(G',\calT{'})$,
by applying the transformation in Fig.~\ref{fi:t2} to every edge,
such that $|V(G')|=O(|V(G)|)$, every vertex of $G'$ has degree at least~$3$, the maximum face size $\ell'$ of $G'$ is $O(\ell)$, $tw(G') = \tw(G)$,
and each vertex $u$ of $G'$ is incident to at least three edges in each block $u$ belongs to.

Second, we transform c-graph $\calC{'}(G',\calT{'})$ into an equivalent c-graph $\calC{*}(G^*,\calT{*})$, by applying the transformation in Fig.~\ref{fi:t3} to every vertex,
such that $|V(G^*)|=O(|V(G')|)$, $G^*$ is $3$-connected, the maximum face size $\ell^*$ of $G^*$ is $O(\ell')$, and the c-graph $\calC{\diamond}(G^\diamond,\calT{\diamond})$ obtained by augmenting $\calC{*}(G^*,\calT{*})$ with all its candidate saturating edges is such that $\tw(G^\diamond) = O(\ell \cdot \tw(G))$, which implies that 
$\tw(G^\diamond) = \overline{\emw}(G)$, since $\overline{\emw}(G) = O(\ell \cdot \tw(G))$ by Lemma~\ref{le:newembedded}.

We now describe each of the transformations in detail.


First, initialize $\calC{'}=\calC{}$. For every vertex $c$ of $G$, let $(c,x)$ be any edge incident to $c$.
Add to $G'$ vertices $w_1$ and $w_2$ and embed paths $(c,w_1,x)$ and $(c,w_2,x)$ in the interior of each of the two faces of $G'$ edge $(c,x)$ is incident to; also, subdivide edge $(c,x)$ with a vertex $w$, add edges $(w_1,w)$ and $(w_2,w)$, and assign vertices $w_1$, $w_2$, and $w$ to the same cluster of $\calT{'}$ ($\calT{}$) vertex $c$ belongs to. Refer to Fig.~\ref{fi:t2}.
By construction, all the newly added vertices have degree at least $3$. In particular, observe that each cut-vertex of $G'$ is incident to at least three edges in each of the blocks such a cut-vertex belongs to.
It is easy to see that $\calC{'}$ and $\calC{}$ are equivalent. Also, the maximum face size $\ell'$ of $G'$ is $O(\ell)$. 
Further, $\tw(G')=\max(\tw(G),3)$, as the transformation replaces edges with subgraphs of treewidth $3$.

Second, initialize $\calC{*}=\calC{'}$. For every vertex $c$ of $G'$, we subdivide each edge $(c,x_i)$ incident to $c$ with a dummy vertex $v_i$. Denote such a graph by $G^+$. 
Also, add an edge between any two vertices $v_i$ and $v_j$ such that edges $(c,x_i)$ and $(c,x_j)$ are consecutive around $c$ in the unique face shared by $v_i$ and $v_j$ in $G^+$. Finally, assign each vertex $v_i$ to the same cluster of $\calT{*}$ ($\calT{'}$) vertex $c$ belongs to. Refer to Fig.~\ref{fi:t3}.
The equivalence between $\calC{*}(G^*,\calT{*})$ and $\calC{'}(G',\calT{'})$ is again straightforward.
Clearly, $|V(G^*)|=O(|V(G')|)$ and $\ell^*=O(\ell')$.
Also, by the observation that the cut-vertices of $G'$ are incident to at least three edges in each of the blocks such cut-vertices belong to, the applied transformation fixes the rotation at all the vertices of $G^*$. Since each vertex of $G^*$ has minimum degree $3$ and $G^*$ has a fixed combinatorial embedding (up to a flip), by the result of Whitney~\cite{Whitney33}, we have that $G^*$ is $3$-connected. 
Furthermore, $\overline{\emw}(G^*)= O(\overline{\emw}(G'))$, since $G^*$ is obtained by subdividing each edge of $G'$ twice, thus obtaining a graph $G^+$ with the same tree-width as $G'$ and maximum face-size in $O(\ell')$, and by adding edges between some of the vertices incident to the faces of $G^+$.
Since $G^\diamond$ is the graph obtained by adding all the candidate saturating-edges of $G^*$ (recall that such edges only connect vertices in the same face of $G^*$), we have that $\tw(G^\diamond)=O(\overline{\emw}(G^*))=O(\overline{\emw}(G'))$. 
Since, by Lemma~\ref{le:newembedded}, $\overline{\emw}(G') = O(\ell' \cdot \tw(G'))$ and since $\ell' = O(\ell)$ and $\tw(G') = O(\tw(G))$, we have
that $\tw(G^\diamond)= O(\ell \cdot \tw(G))$.
This concludes the proof of the lemma.\xspace
\end{proof}
\section{A Subexponential-Time Algorithm for C-Planarity}\label{se:exponential}

In this section, we describe a divide-and-conquer algorithm for testing the \mbox{c-planarity} of $2$-connected c-graphs exploiting cycle separators in planar graphs. 


The ``conquer'' part of our divide-and-conquer uses the following operation on pairs of c-graphs. 
Let $G_1$ and $G_2$ be plane graphs on overlapping vertex sets such that the outer face of $G_1$ and an inner face of $G_2$ are bounded by the same cycle $\rho$.
\emph{Merging} $G_1$ and $G_2$ constructs a new plane graph $G$  from $G_1 \cup G_2$ as follows. We remove multi-edges (belonging to cycle $\rho$)  and assign each vertex $v$ a rotation whose restriction to the edges of $G_2$ (of $G_1$) is the same as the rotation at $v$ in $G_2$ (in $G_1$). This is possible as cycle $\rho$ bounds the outer face of $G_1$ and an inner face of $G_2$. We say that $G$ is a \emph{merge} of $G_1$ and $G_2$.
Now consider two c-graphs $\calC{}_1(G_1,\calT{}_1)$ and $\calC{}_2(G_2,\calT{}_2)$ such that (i) $G_1 \cap G_2 = \rho$ is a cycle, (ii) for each vertex $v \in V(\rho)$, vertex $v$ belongs to the same cluster $\mu$ both in $\calT{}_1$ and in $\calT{}_2$,  and (iii) cycle $\rho$ bounds the outer face of $G_1$ and an inner face of $G_2$ (when a choice for their outer faces that satisfies Condition~(\ref{con:2}) of Theorem~\ref{th:characterization} has been made). \emph{Merging} $\calC{}_1$ and $\calC{}_2$ is the operation that constructs a c-graph $\calC{}(G,\calT{})$ as follows. Graph $G$ is obtained by merging $G_1$ and $G_2$.
Tree $\calT{}$ is obtained as follows. Initialize $\calT{}$ to $\calT{}_1$. First, for each cluster $\mu \in \calT{}_2 \cap \calT{}_1$, we add the leaves of $\mu$ in $\calT{}_2$ as children of cluster $\mu$ in $\calT{}$, removing duplicate leaves. Second, for each cluster $\mu \in \calT{}_2 \setminus \calT{}_1$, we add the subtree of $\calT{}_2$ rooted at $\mu$ as a child of the root of $\calT{}$. We say that $\calC{}(G,\calT{})$ is a \emph{merge} of $\calC{}_1(G_1,\calT{}_1)$ and $\calC{}_2(G_2,\calT{}_2)$.

In the ``divide'' part of the divide-and-conquer, we replace subgraphs of the input by smaller planar components called \emph{cycle-stars} that preserve their c-planarity properties.
Let $G$ be a connected plane graph that contains a face whose boundary is a cycle~$\rho$. We say that $G$ is a \emph{cycle-star} if removing all the edges of $\rho$ makes $G$ a forest of stars; refer to Fig.~\ref{fig:lemma-c}. Also, we say that cycle $\rho$ is \emph{universal} for~$G$ and we say that a vertex of $G$ is a \emph{star vertex} of $G$ if it does not belong to $\rho$. Clearly, the size of $G$ is $O(|\rho|)$.

For a c-planar c-graph $\calC{}(G,\calT{})$ and a cycle separator $\rho$, we denote by $\calC{+}_\rho(G^+,\calT{+})$ (by $\calC{-}_\rho(G^-,\calT{-})$) the c-graph obtained from $\calC{}$ by removing all the vertices and the edges of $G$ that lie in the interior of $\rho$ (in the exterior of $\rho$). 
%
%
%
Consider a super c-graph $\calC{'}(G',\calT{})$ of $\calC{}$ satisfying Condition~(\ref{con:3}) of Theorem~\ref{th:characterization}, which exists since $\calC{}$ is c-planar.
We now give a procedure, which will be useful throughout the paper, to construct two special c-planar c-graphs $\calC{-}(S^-,{\cal K}^-)$ and $\calC{+}(S^+,{\cal K}^+)$ associated with $\calC{'}$ whose properties \mbox{are described in the following lemma.}

\begin{restatable}{lemma}{LemmaSplit}\label{le:split}
C-graphs $\calC{-}(S^-,{\cal K}^-)$ and $\calC{+}(S^+,{\cal K}^+)$ are such that:
\begin{enumerate}
\item\label{prop:0} graph $S^-$ ($S^+$) is a cycle-star whose universal cycle is $\rho$,
\item\label{prop:1} cycle $\rho$ bounds the outer face of $S^-$ (an inner face of $S^+$), 
\item\label{prop:2} each star vertex of $S^-$ ($S^+$) and all its neighbours belong to the same cluster in ${\cal K}^-$ (${\cal K}^+$), and
\item\label{prop:3} the c-graph $\calC{}_{out}$ ($\calC{}_{in}$) obtained by merging $\calC{-}(S^-,{\cal K}^-)$ and $\calC{+}_\rho(G^+,\calT{+})$ (by merging $\calC{+}(S^+,{\cal K}^+)$ and $\calC{-}_\rho(G^-,\calT{-})$) is c-planar.
\end{enumerate}
\end{restatable}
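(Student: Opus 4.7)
\begin{sketch}
The plan is to build $\calC{-}(S^-,{\cal K}^-)$ (and symmetrically $\calC{+}(S^+,{\cal K}^+)$) by ``collapsing'' each connected component of a cluster inside $\rho$ into a single star vertex whose edges reach that component's contact points with $\rho$. Let $H^-$ denote the subgraph of $G'$ consisting of $\rho$ together with everything strictly inside $\rho$. A useful preliminary observation is that, since $\calC{'}$ is c-connected and $\rho$ separates the plane, every connected component of $H^-(\mu)$ for any cluster $\mu$ must intersect $V(\rho)$; otherwise such a component would be a separate component of $G'(\mu)$, contradicting c-connectedness of $\calC{'}$. I would initialize $S^-$ as the cycle $\rho$, with ${\cal K}^-$ inheriting cluster assignments from $\calT{}$. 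Then, for each cluster $\mu$ and each connected component $\xi$ of $H^-(\mu)$ that has at least one vertex strictly inside $\rho$, I would introduce a new star vertex $v_\xi$, assign it to $\mu$, and join $v_\xi$ by edges to every vertex of $\xi \cap V(\rho)$. These edges can be routed through the region originally occupied by $\xi$ in the planar drawing of $H^-$, yielding a planar embedding of $S^-$ with $\rho$ as outer face.

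Properties~\ref{prop:0}--\ref{prop:2} would then follow by direct inspection: removing the edges of $\rho$ leaves $S^-$ as a disjoint union of stars centered at the vertices $v_\xi$; $\rho$ bounds the outer face by construction; and each $v_\xi$ together with all its neighbors $\xi \cap V(\rho)$ lies entirely in cluster $\mu$. To establish Property~\ref{prop:3}, I would verify the three conditions of Theorem~\ref{th:characterization} for the merge $\calC{}_{out}$. Planarity follows because the merge amounts to simultaneously contracting each interior component $\xi$ in the planar graph $G'$ to the single vertex $v_\xi$, and planarity is closed under contractions. Condition~(\ref{con:2}) is preserved because each contraction happens inside a single cluster $\mu$ and therefore cannot create a new cycle of one cluster enclosing a vertex of another. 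For Condition~(\ref{con:3}), I would argue that $\calC{}_{out}$ is already c-connected, so no saturating edges are needed: for each cluster $\mu$, any connectivity in $G'(\mu)$ that previously passed through some $\xi$ now passes through the star at $v_\xi$, which joins exactly the same set $\xi \cap V(\rho)$ of boundary vertices; since $G'(\mu)$ is connected, so is its contraction.

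The main obstacle will be ensuring that the whole construction admits a single planar embedding, even when many cluster-components of potentially different clusters populate the interior of $\rho$ simultaneously. The key is that, in the c-planar drawing of $\calC{'}$, these components $\xi$ are pairwise disjoint connected subgraphs inside the disk bounded by $\rho$, so their intersections with $V(\rho)$ cannot interleave around $\rho$ in a way that would force two stars to cross. Consequently, each $\xi$ can be contracted independently to its star vertex without introducing crossings, and $S^-$ is planar as claimed. The construction of $\calC{+}$ is entirely symmetric, applied to the exterior of $\rho$ with $\rho$ chosen to bound an inner face of $S^+$.
\end{sketch}
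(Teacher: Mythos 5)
Your construction is essentially the paper's: collapse each connected component of each cluster lying inside $\rho$ to a single star vertex attached to that component's contacts with $\rho$, argue planarity from the fact that disjoint connected subgraphs of the closed disk bounded by $\rho$ cannot have interleaving attachment sets, and derive Property~\ref{prop:3} from the c-connectivity of the collapsed graph. However, two steps have genuine gaps. First, your preliminary observation is false for clusters that lie entirely strictly inside $\rho$: such a cluster $\mu$ is itself a single connected component of $G'(\mu)$ contained in the open disk, so its component of $H^-(\mu)$ meets no vertex of $V(\rho)$ and nothing is contradicted. Your rule (``create $v_\xi$ for every component with a vertex strictly inside $\rho$'') then produces an isolated star vertex with $\xi\cap V(\rho)=\emptyset$, so $S^-$ is disconnected, is not a cycle-star, and its size is no longer bounded by $O(|\rho|)$ (there can be many such clusters), which breaks Property~\ref{prop:0} and the later enumeration of cycle-stars. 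The paper avoids this with an explicit preprocessing step: since $\calC{'}$ is c-planar, $\rho$ is not a cluster-separator (Observation~\ref{obs:safe-cycle}), so every cluster disjoint from $V(\rho)$ lies entirely inside or entirely outside $\rho$; the ones inside are deleted outright and simply do not appear in $S^-$, consistently with the fact that they do not appear in $\calC{+}_\rho(G^+,\calT{+})$ either.

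Second, your argument for Property~\ref{prop:3} proves the wrong statement. The c-graph $\calC{}_{out}$ of the lemma is the merge of $\calC{-}(S^-,{\cal K}^-)$ with $\calC{+}_\rho(G^+,\calT{+})$, where $G^+$ is the exterior part of the \emph{original} graph $G$, not of $G'$; it therefore contains none of the saturating edges that $\calC{'}$ uses outside $\rho$, and it is in general \emph{not} c-connected---if it always were, the whole recursive {\sc OuterCheck} step of the algorithm would be pointless. What your contraction argument actually shows is that the merge of $\calC{-}(S^-,{\cal K}^-)$ with the exterior part of $\calC{'}$ (the paper's $\calC{*}$) is planar and c-connected. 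The missing step is to note that this c-graph is a c-connected planar \emph{super} c-graph of $\calC{}_{out}$ (the extra edges are candidate saturating edges drawn in faces exterior to $\rho$) and then invoke Condition~(\ref{con:3}) of Theorem~\ref{th:characterization} to conclude that $\calC{}_{out}$ is c-planar. A smaller point: describing the collapse as a ``contraction'' is imprecise, since the strictly-interior part of a component $\xi$ may be disconnected (its pieces joined only through vertices of $\rho$), and identifying a disconnected vertex set is not a minor operation; your closing non-interleaving argument is the correct justification (and the one the paper relies on), so that issue is only one of wording.
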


We describe how to construct $\calC{-}(S^-,{\cal K}^-)$ from $\calC{'}$; refer to Fig.~\ref{fig:replacement}. The construction of $\calC{+}(S^+,{\cal K}^+)$ is symmetric.

First, for each cluster $\mu$ such that $V(\mu) \cap V(\rho) = \emptyset$, we remove all the vertices in $V(\mu)$ lying in the interior of $\rho$ together with their incident edges.
By Observation~\ref{obs:safe-cycle}, the resulting c-graph is still c-planar and c-connected.
Also, we remove edges in the interior of $\rho$ whose endpoints belong to different clusters. Clearly, this simplification preserve c-connectedness. \mbox{We still denote the resulting c-graph as~$\calC{'}$.}
 
Second, consider the c-graph $H$ consisting of the vertices and of the edges of $\calC{'}$ lying in the interior and along the boundary of $\rho$. For each cluster $\mu$ and for each connected component $c^i_\mu$ of $\mu$ in $H$, we replace all the vertices and edges of $c^i_\mu$ lying in the interior of $\rho$ in $\calC{'}$ with a single vertex $s^i_\mu$, assigning it to the same cluster $\mu$ and making it adjacent to all the vertices in $V(c^i_\mu) \cap V(\rho)$. Let $\calC{*}$ be the resulting c-graph. It is easy to see that such a transformation preserves c-connectedness and planarity, therefore $\calC{*}$ is a c-connected c-planar c-graph.
By construction, each vertex $v \in V(\rho)$ is adjacent to a single vertex $s^i_\mu$, where $\mu$ is the cluster vertex $v$ belongs to; thus, the vertices and the edges in the interior and along the boundary of $\rho$ in $\calC{*}$ form 
c-graph $\calC{-}(S^-,{\cal K}^-)$ whose underlying graph $S^-$ is a cycle-star satisfying Properties~(\ref{prop:0}),~(\ref{prop:1}) and~(\ref{prop:2}) of Lemma~\ref{le:split}. 
Further, since the subgraph of $\calC{*}$ consisting of the vertices and of the \mbox{edges lying in the} exterior and along the boundary of $\rho$ coincides with $\calC{+}_\rho(G^+,\calT{+})$, we have that $\calC{*}$ is a c-planar c-connected super c-graph of $\calC{}_{out}$. Thus, by Condition~(\ref{con:3}) of Theorem~\ref{th:characterization}, Property~(\ref{prop:3}) of Lemma~\ref{le:split} is also satisfied.

\begin{figure}[tb!]
    \centering
    \subfloat
    {
    \includegraphics[page=9,width=.23\textwidth]{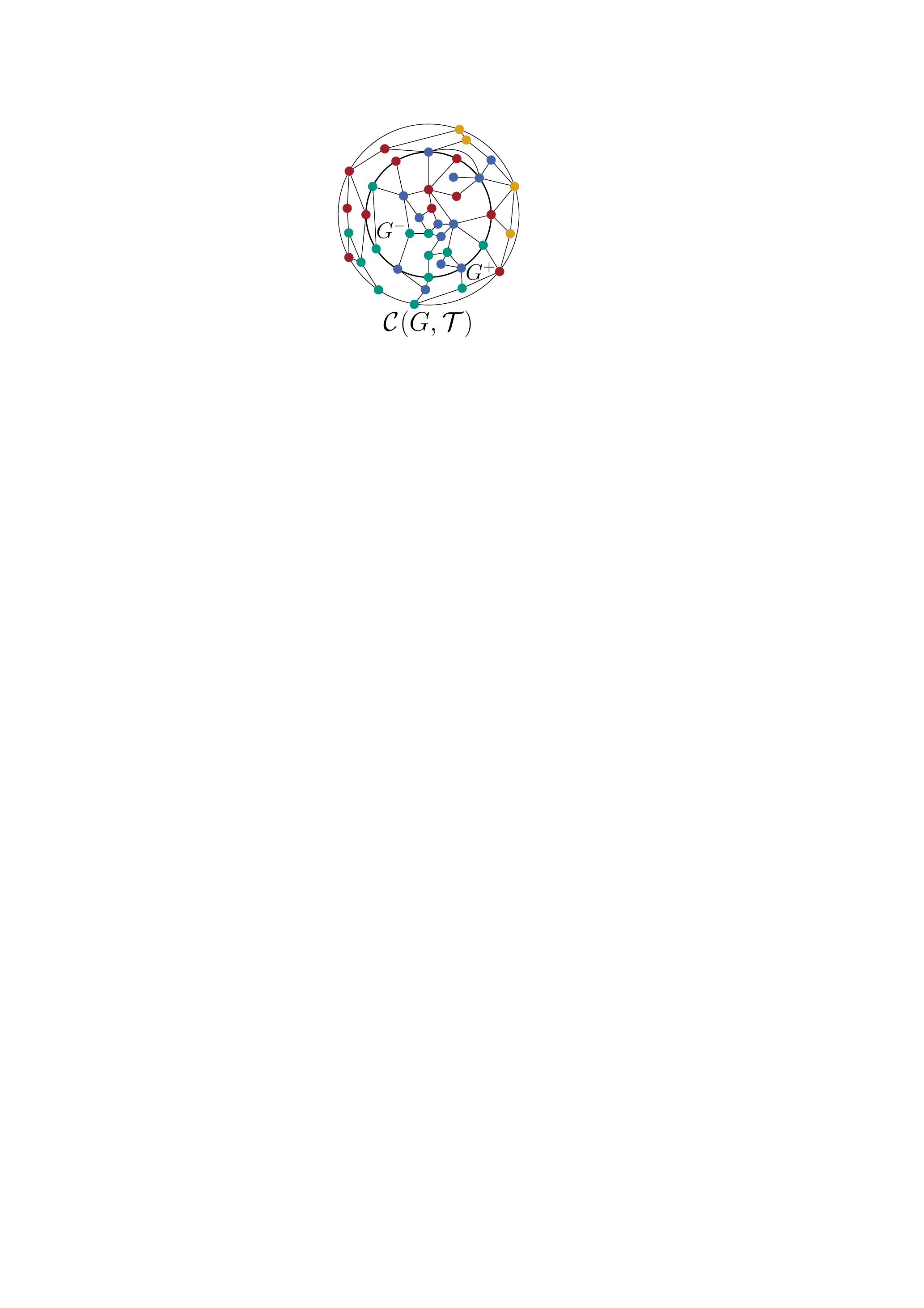}
    \label{fig:lemma-a}
    }
    \hfil
  \subfloat
  {
    \includegraphics[page=10,width=.23\textwidth]{lemmasplit}
    \label{fig:lemma-b}
    }
    \hfil
      \subfloat
  {
    \includegraphics[page=11,width=.23\textwidth]{lemmasplit}
    \label{fig:lemma-c}
    }
    \hfil
  \subfloat
  {
      \includegraphics[page=12,width=.23\textwidth]{lemmasplit}
    \label{fig:lemma-d}
    }
    \caption{
    (a) Super c-graph $\calC{'}$ of $\calC{}$. (b) Each component of the blue cluster $\mu$ in~$H$ lies inside a simple closed region. (c) Cycle-star $S^-$ corresponding to $H$. (d) The c-connected c-planar c-graph $\calC{*}$ obtained by {\em replacing} $H$ with $S^-$ in $\calC{'}$.}
  \label{fig:replacement}
\end{figure}

Let $\calC{-}_\Delta(R^-,{\cal J}^-)$ ($\calC{+}_\Delta(R^+,{\cal J}^+)$) be a c-graph obtained by augmenting the \mbox{c-graph} $\calC{-}(S^-,{\cal K}^-)$ ($\calC{+}(S^+,{\cal K}^+)$) of Lemma~\ref{le:split} by introducing new vertices, each belonging to a distinct cluster, and by adding edges only between the vertices in $V(S^-)$ ($V(S^+)$) and the newly introduced vertices in such a way that cycle $\rho$ bounds a face of $R^-$ ($R^+$) and all the other faces of $R^-$ ($R^+$) are triangles. From the construction of Lemma~\ref{le:split}, we also have the following useful technical remark.

\begin{restatable}{remarkx}{RemarkSplit}\label{rm:split}
The c-graph obtained by merging $\calC{-}_\Delta(R^-,{\cal J}^-)$ and $\calC{+}_\rho(G^+,\calT{+})$ (by merging $\calC{+}_\Delta(R^+,{\cal J}^+)$ and $\calC{-}_\rho(G^-,\calT{-})$) is c-planar.
\end{restatable} 

We now describe a divide-and-conquer algorithm based on Lemma~\ref{le:split}, called {\sc TestCP}, that tests the c-planarity of a $2$-connected c-graph $\calC{}(G,\calT{})$ and returns a super c-graph $\calC{*}(G^*,\calT{})$ of $\calC{}$ satisfying Condition~(\ref{con:3}) of Theorem~\ref{th:characterization}, if $\calC{}$ is c-planar. See Fig.~\ref{fig:algo} for illustrations of the c-graphs constructed during the execution of the algorithm.

We first give an intuition on the role of cycle-stars in Algorithm {\sc TestCP}.

Let $\calC{}(G,\calT{})$ be a c-planar c-graph and let $\rho$ be a cycle separator of $G$. 
By Lemma~\ref{le:split}, for each c-connected c-planar super c-graph $\calC{'}$ of $\calC{}$, we can injectively map the super c-graph $I^-$ of $\calC{-}_\rho(G^-,\calT{-})$, composed of the vertices of $G^-$ and of the edges in the interior and along the boundary of $\rho$ in $\calC{'}$, with a cycle-star~$S^-$ whose universal cycle is $\rho$. This is due to the fact that there exists a \emph{one-to-one correspondence} between the connected components of $I^-$ induced by the vertices of each cluster in $\calT{-}$ and the star vertices of $S^-$.
Similar considerations hold for the super c-graph $I^+$ of $\calC{+}_\rho(G^+,\calT{+})$.
Although the c-planarity of $\calC{+}_\rho$ and $\calC{-}_\rho$ is necessary for the c-planarity of $\calC{}$, it is not a sufficient condition, as the connectivity of clusters inside $\rho$ in $I^-$ (\emph{internal cluster-connectivity}) and the connectivity of clusters outside $\rho$ in $I^+$ (\emph{external cluster-connectivity}) must also together determine the c-connectedness of $\calC{'}$.
The role of cycle-stars $S^-$ and $S^+$ in the algorithm presented in this section is exactly that of concisely \emph{representing} the internal cluster-connectivity of $I^-$ and the external cluster-connectivity of $I^+$, respectively, to devise a divide-and-conquer approach \mbox{to test the c-planarity of~$\calC{}$.}

\paragraph{Outline of the algorithm}.
We  overview  the main steps of our algorithm below.
\begin{itemize}
\item If $n = O(\ell)$, we test $c$-planarity directly, as a base case for the divide-and-conquer recursion.
Otherwise, we construct a cycle-separator $\rho$ of $G$ and test whether $\rho$ is a cluster-separator.
If so, $\calC{}$ cannot be c-planar (Observation~\ref{obs:safe-cycle}), and we halt the search.
\item We generate all cycle-stars  $S^-_i$ with universal cycle $\rho$. A cycle-star $S^-_i$ \mbox{represents} a potential connection pattern of clusters inside~$\rho$.
 For each cycle-star $S^-_i$ we apply  Procedure {\sc OuterCheck} to test whether this \mbox{pattern} could be augmented by additional connections outside $\rho$ to complete the \mbox{desired} cluster-connectivity.
That is, we test whether $\calC{+}_\rho$ admits a c-connected \mbox{c-planar} super c-graph whose internal cluster-connectivity is represented by~$S^-_i$.
To test this, we replace the subgraph $G^-$ of $G$ in $\calC{}$ with an internally-triangulated supergraph $R^-_i$ of $S^-_i$ to obtain a c-graph $\calC{+}$ and recursively test $\calC{+}$ for c-planarity. It is important to observe that, the triangulation step prevents $\calC{+}$ from having saturating edges inside~$\rho$, thus enforcing exactly the same internal-cluster connectivity represented by $S^-_i$ (Remark~\ref{rm:split}). If $\calC{+}$ is c-planar, the procedure returns a c-connected c-planar super c-graph $\calC{+}_{con}$ of~$\calC{+}$. If no cycle-star \mbox{passes the test, $\calC{}$ is not} c-planar by Lemma~\ref{le:split}. 
We call all the cycle-stars that passed this test \emph{admissible}.
\item 
We then apply Procedure {\sc InnerCheck} to verify whether the internal-cluster connectivity represented by some admissible cycle-star $S^-_i$ can \emph{actually} be \mbox{realized} by a c-connected c-planar super c-graph of $\calC{}$. \mbox{For each admissible} cycle-star $S^-_i$, the procedure applies the construction of Lemma~\ref{le:split} to obtain a cycle-star $S^+_i$ representing the external cluster-connectivity of $\calC{+}_{con}$. Then, it tests whether $\calC{-}_\rho$ admits a c-connected c-planar super c-graph $\calC{-}_{con}$ whose external cluster-connectivity is represented by $S^+_i$. This is done similarly to Procedure \mbox{\sc OuterCheck}, by triangulating the exterior of $\rho$ and recursively testing $c$-planarity of a smaller graph. If Procedure {\sc InnerCheck} succeeds for any admissible cycle-star $S^-_i$, then we can merge the subgraphs of $\calC{-}_{con}$ and of $\calC{+}_{con}$ induced by the vertices inside and outside $\rho$, respectively, to obtain a c-connected c-planar super c-graph of $\calC{}$, and we halt the search with a successful outcome. It might be the case that $\calC{-}_{con}$ has a different internal-cluster connectivity than that represented by  $S^-_i$, but this is not a problem, because the different cluster connectivity (which necessarily corresponds to a different admissible cycle-star) still provides a c-planar drawing \mbox{of the whole graph.}
\item If no  admissible cycle-star passes Procedure {\sc InnerCheck}, $\calC{}$ is not c-planar.
\end{itemize}

It is crucial in this algorithm that $\rho$ be a cycle-separator. Because it is a cycle, no candidate saturating edges can connect vertices in the interior of $\rho$ to vertices in the exterior of $\rho$, as such vertices do not share any face. That is, the interaction between $G^-_{\rho}$ and $G^+_{\rho}$ only happens through vertices of $\rho$. This allows us to split the instance into smaller instances recursively along $\rho$ and
model the interaction via cycle-stars (by Lemma~\ref{le:split} and Remark~\ref{rm:split}) whose \mbox{universal cycle is~$\rho$.}

The complete listing of Algorithm {\sc TestCP} is provided in the next page.

\begin{figure}[tb!]
\hspace{-4mm}
    \centering
    \subfloat{
    \hspace{-2.2mm}
    \includegraphics[page=1,width=.2\textwidth]{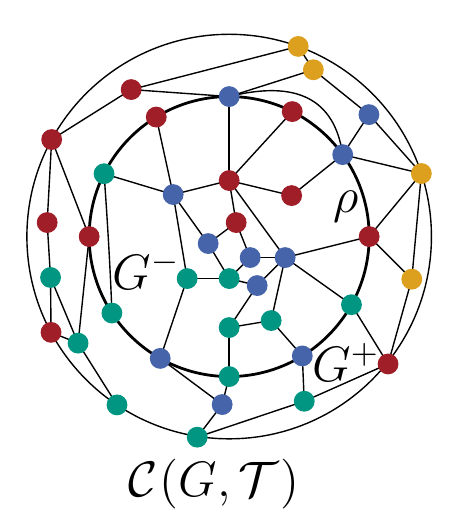}
    \label{fig:algo-a}
    \hspace{-2mm}
    }
  \subfloat{
    \includegraphics[page=2,width=.2\textwidth]{algorithm_f1}
    \label{fig:algo-b}
    \hspace{-2mm}
    }
  \subfloat{
      \includegraphics[page=3,width=.2\textwidth]{algorithm_f1}
    \label{fig:algo-c}
    \hspace{-2mm}
    }
  \subfloat{
    \includegraphics[page=4,width=.2\textwidth]{algorithm_f1}
    \label{fig:algo-d}
    \hspace{-3mm}
    }
  \subfloat{
    \includegraphics[page=5,width=.2\textwidth]{algorithm_f1}
    \label{fig:algo-e}
    }\\
  \subfloat{
    \includegraphics[page=6,width=.2\textwidth]{algorithm_f1}
    \label{fig:algo-b}
    }
  \subfloat{
    \includegraphics[page=7,width=.2\textwidth]{algorithm_f1}
    \label{fig:algo-b}
    }
  \subfloat{
      \includegraphics[page=8,width=.2\textwidth]{algorithm_f1}
    \label{fig:algo-c}
    }
  \subfloat{
    \includegraphics[page=9,width=.2\textwidth]{algorithm_f1}
    \label{fig:algo-d}
    }
    \caption{Illustrations of all of the c-graphs constructed by Algorithm {\sc TestCP}.
  }
  \label{fig:algo}
\end{figure}

\footnotetext[1]{The merging operations are well defined as cycle $\rho$ bounds  the outer face of $R^-_i$ and an inner face of $G^+$, as well as an inner face of $R^+_i$ and the outer face of $G^-$.}
\footnotetext[2]{As $\calC{+}(G^+_i,\calT{+}_i)$ and $\calC{-}(G^-_i,\calT{-}_i)$ are $2$-connected,\hspace{.1943em}{\sc TestCP} can be recursively applied.}

\paragraph{Base Case of the algorithm.}
The base case occurs when $\calC{+}(G^+_i, \calT{+}_i)$ and $\calC{-}(G^-_i, \calT{-}_i)$ are no longer smaller than $\calC{}(G, \calT{})$.

Observe that, we obtained $G^+_i$ ($G^-_i$) by merging $G^+$ ($G^-$) and $R^-_i$ ($R^+_i$) along cycle $\rho$, which has size $s(n)$. The size of $G^+$ and $G^-$ is bounded by \mbox{$\frac{2n}{3} + s(n)$}, while the size of $R^-_i$ and $R^+_i$ is bounded by $3 s(n)$. 
Therefore, since cycle $\rho$ is shared by all the mentioned graphs by construction,
we have that the size of $G^+_i$ and $G^-_i$ is at most $\frac{2n}{3}+3s(n)$. Thus, with \mbox{$s(n) \leq 2\sqrt{\ell n}$~\cite{m-fsscs2pg-86}}, we can set the base case of Algorithm~{\sc TestCP} when $n \leq \frac{2n}{3}+6\sqrt{\ell n}$, that is, $n \leq 324\ell$.

\paragraph{Correctness of the algorithm.}
We show that, given a $2$-connected c-graph $\calC{}(G,\calT{})$, Algorithm {\sc TestCP} returns {\tt YES}, which happens when both procedures {\sc OuterCheck} and {\sc InnerCheck} succeed, if and only if $\calC{}(G,\calT{})$ is c-planar.

($\Rightarrow$) Suppose that {\sc OuterCheck} and {\sc InnerCheck} succeed for a cycle-star $S^-_\omega \in {\cal S}$ constructed at step~\ref{alg:generateS}. We show \mbox{that $\calC{}(G,\calT{})$ is~c-planar.}
Consider the \mbox{c-graph} $\calC{*}(G^*,\calT{})$ constructed at step~\ref{algo:cPlanarCons} from $\calC{-}_{con}(H^-_\omega,\calT{-}_\omega)$ and $\calC{+}_{con}(H^+_\omega,\calT{+}_\omega)$.
The proof of this direction follows by the next claim about $\calC{*}$ and from Theorem~\ref{th:characterization}.

\newgeometry{
   letterpaper,         
   textwidth=12.2cm,  
   textheight=19.3cm, 
   hratio=1:1,      
   vratio=4:5,      
}

\begin{algorithm}
\caption*{\bf {\sc Algorithm TestCP(c-graph $\calC{}(G, \calT{})$)}}

\smallskip
\noindent\underline{\sc {Base case}}

\smallskip
If $|V(G)| = O(\ell)$, then we can test {\sc C-Planarity} for $\calC{}(G,\calT{})$ in $O(1)$ time when $\ell$ is a \mbox{constant}, by performing a brute force search to find a subset $E'$ of the candidate saturating edges of $\calC{}$ such that
\mbox{c-graph $\calC{'}(G\cup E', \calT{})$ satisfies Condition~(\ref{con:3}) of Theorem~\ref{th:characterization}.}

\smallskip
\noindent\underline{\sc {Recursive step}}
\begin{enumerate}
	\item \label{alg:selectRho} Select a cycle separator $\rho$ of $G$. If $\rho$ is a cluster-separator, then \Return{\tt NO}; otherwise, construct c-graphs $\calC{+}_{\rho}(G^+,\calT{+})$ and $\calC{-}_{\rho}(G^-,\calT{-})$ as defined in Lemma~\ref{le:split}.
	\item {\sc OuterCheck}
	\begin{enumerate}
    \item \label{alg:generateS} Construct the set $\cal S$ of all cycle-stars such that, for every $S \in {\cal S}$, it holds that (i) $\rho$ is the universal cycle of $S$, (ii) $\rho$ bounds the outer face of $S$, and (iii) every star vertex of $S$ is incident only to vertices of $\rho$ belonging to the same cluster.
		\item \label{alg:forS} For each cycle-star $S^-_i \in {\cal S}$:
		\begin{enumerate} 
			\item \label{alg:innerStar1} Construct a c-graph $\calC{-}(S^-_i,{\cal K}^-_i)$ as follows. First, initialize ${\cal K}^-_i$ to the subtree of $\calT{}$ whose leaves are the vertices of $S^-_i$. Then, for each star vertex $v
      $ of $S^-_i$, assign $v$ to the cluster $\mu \in {\cal K}^-_i$ to which all its neighbours belong.
			\item \label{alg:innerStar2} Augment $\calC{-}(S^-_i,{\cal K}^-_i)$ to an internally triangulated c-graph $\calC{-}_\Delta(R^-_i,{\cal J}^-_i)$ by introducing new vertices, each belonging to a distinct cluster, and by adding edges only between vertices in $V(S^-_i)$ and the newly introduced vertices (that is, no two non-adjacent vertices in $S^-_i$ are adjacent in $R^-_i$). 
			\item \label{alg:innerStar3} Merge $\calC{-}_\Delta(R^-_i,{\cal J}^-_i)$ and $\calC{+}_{\rho}(G^+,\calT{+})$ to obtain a c-graph $\calC{+}(G^+_i,\calT{+}_i)$\red{\footnotemark[1]}.
			\item \label{alg:recursionOut} Run {\sc TestCP}($\calC{+}(G^+_i,\calT{+}_i)$) to test whether $\calC{+}(G^+_i,\calT{+}_i)$ is c-planar\red{\footnotemark[2]}. 
		\end{enumerate}	
		\item If no c-graph $\calC{+}(G^+_i,\calT{+}_i)$ is c-planar, then \Return{\tt NO}; otherwise, initialize ${\cal S}'$ as the
            set of {\em admissible} cycle-stars, i.e., the cycle-stars in $\cal S$ whose corresponding c-graph $\calC{+}(G^+_i,\calT{+}_i)$ is c-planar.
	\end{enumerate}
	\item {\sc InnerCheck}
	\begin{enumerate} 
		\item For each admissible cycle-star $S^-_i \in {\cal S}'$:
		\begin{enumerate} 
		\item \label{algo:outerStar1} Let $\calC{+}_{con}(H^+_i,\calT{+}_i)$ be the c-planar c-connected super c-graph of $\calC{+}$ returned by {\sc TestCP}($\calC{+}(G^+_i,\calT{+}_i)$) (step~\ref{alg:recursionOut}). 
		Apply the construction of Lemma~\ref{le:split} to c-graph $\calC{+}_{con}(H^+_i,\calT{+}_i)$ and cycle $\rho$ to obtain a c-graph $\calC{+}(S^+_i,{\cal K}^+_i)$ satisfying Properties~(\ref{prop:1}) and~(\ref{prop:2}) of the lemma. 
		\item \label{algo:outerStar2} Augment $\calC{+}(S^+_i,{\cal K}^+_i)$ to a c-graph $\calC{+}_\Delta(R^+_i,{\cal J}^+_i)$ 
		by introducing new vertices, each belonging to a distinct cluster, and by adding edges only between the vertices in $V(S^+_i)$ and the newly introduced vertices in such a way that cycle $\rho$ bounds an inner face of $R^+_i$ and all the other faces of $R^+_i$ are triangles.
		\item \label{algo:outerStar3} Merge $\calC{+}_\Delta(R^+_i,{\cal J}^+_i)$ and $\calC{-}_{\rho}(G^-,\calT{-})$ to obtain a c-graph $\calC{-}(G^-_i,\calT{-}_i)$\red{\footnotemark[1]}.
		\item \label{alg:recursionIn} Run {\sc TestCP}($\calC{-}(G^-_i,\calT{-}_i)$) to test whether $\calC{-}(G^-_i,\calT{-}_i)$ is c-planar\red{\footnotemark[2]}. 
		\item \label{algo:cPlanarCons} If {\sc TestCP}($\calC{-}(G^-_i,\calT{-}_i)$) returns {\tt YES}, then construct a c-planar c-connected super c-graph $C^*(G^*,\calT{})$ of $\calC{}(G, \calT{})$ as follows. 
		Let $\calC{-}_{con}(H^-_i,\calT{-}_i)$ be the c-planar c-connected c-graph returned by {\sc TestCP}($\calC{-}(G^-_i,\calT{-}_i)$). Remove all the vertices and edges of $H^-_i$ in the exterior of cycle $\rho$, thus obtaining a new c-graph $\calC{}_{in}(G_{in}, \calT{}_{in})$ in which cycle $\rho$ bounds the outer face. Similarly, remove all the vertices and edges of $H^+_i$ in the interior of cycle $\rho$, thus obtaining a new c-graph $\calC{}_{out}(G_{out}, \calT{}_{out})$ in which cycle $\rho$ bounds an inner face.
		Finally, merge $\calC{}_{in}$ and $\calC{}_{out}$ to obtain c-graph $\calC{*}(G^*,\calT{})$ and \mbox{\Return{\tt YES}} along with c-graph $\calC{*}(G^*,\calT{})$.
		\end{enumerate} 
	\end{enumerate} 
	\item \Return{\tt NO} if no c-graph $\calC{-}(G^-_i,\calT{-}_i)$, constructed at step~\ref{algo:outerStar3}, is c-planar.
\end{enumerate}
\end{algorithm}
\newgeometry{left=2cm,right=2cm,top=2cm,bottom=2cm} 


\begin{restatable}{claimx}{directdirection}\label{claim:directdirection}
C-graph $\calC{*}(G^*,\calT{})$ is a 
c-planar c-connected super c-graph of $\calC{}(G,\calT{})$.
\end{restatable}

\begin{proof}
Graphs $G_{in}$ and $G_{out}$ are planar, as they are subgraphs of $H^-_\omega$ and $H^+_\omega$, respectively (step~\ref{algo:cPlanarCons}). By construction, cycle $\rho$ bounds an inner face of $G_{out}$ and the outer face of $G_{in}$. Therefore $G^*$, obtained by merging $G_{in}$ and $G_{out}$, is planar. Also, observe that, $G_{in}$ and $G_{out}$ are supergraphs of $G^-$ and $G^+$, respectively, therefore graph $G^*$ is a super graph of $G$.

We now show that $\calC{*}$ is c-connected, that is, for each cluster $\mu \in \calT{}$, graph $G^*(\mu)$ is connected.

First, let $\mu$ be a cluster in $\calT{}$ such that $V(\mu)$ lies in the interior of $\rho$ in $G$. Since $\calC{-}_{con}(H^-_\omega,\calT{-}_\omega)$ is c-connected, we have that $H^-_\omega(\mu)$ is connected. Also, $V(\mu)$ lie in the interior of $\rho$ in $H^-_\omega$. By construction, $G_{in}$ contains all the vertices and the edges in the interior of $\rho$, therefore  we also have that $G_{in}(\mu)$ is connected. Hence, $G^*(\mu)$ is connected. The proof that graph $G^*(\mu)$ is connected, for each cluster $\mu$ in $\calT{}$ such that $V(\mu)$ lies in the exterior of $\rho$ in $G$, is analogous.

Then, let $\mu$ be a cluster such that $V(\mu) \cap V(\rho) \neq \emptyset$. 
Clearly, if $V(\mu) \subseteq V(\rho)$, then $G^*(\mu)$ is connected since both $G_{in}(\mu)$ and $G_{out}(\mu)$ are connected.
Otherwise, the following three cases are possible: either $G_{in}(\mu)$ is disconnected, or $G_{out}(\mu)$ is disconnected, or both $G_{in}(\mu)$ and $G_{out}(\mu)$ are disconnected. 

We show that all the vertices in $G_{in}(\mu)$ and in $G_{out}(\mu)$ are connected in~$G^*(\mu)$.


We first prove that all the vertices in $G_{in}(\mu)$ are connected in~$G^*(\mu)$.

Consider two connected components  $c'$ and $c''$ of $G_{in}(\mu)$. 
Observe that, by construction, c-graph $\calC{-}_{con}(H^-_\omega,\calT{-}_\omega)$ (step~\ref{algo:cPlanarCons}) is a merge of $\calC{}_{in}(G_{in}, \calT{}_{in})$ and of $\calC{+}_\Delta(R^+_\omega,{\cal J}^+_\omega)$.
Since $\calC{-}_{con}$ is c-connected and since $R^+_\omega$ is an augmentation of cycle-star $S^+_\omega$ such that edges in $E(R^+_\omega) \setminus E(S^+_\omega)$ do not have endpoints in the same cluster, the c-graph $\calC{\#}(G^\#,\calT{\#})$ obtained by merging $\calC{}_{in}$ and $\calC{+}(S^+_\omega,{\cal K}^+_\omega)$ is also c-connected.
Since $\calC{\#}$ is c-connected,
the vertices of $c'$ and $c''$ are connected via star vertices of $S^+_\omega$ and vertices of $G_{in}$ belonging to cluster $\mu$ in~$G^\#(\mu)$.
Observe that, by construction, c-graph $\calC{+}_{con}(H^+_\omega,\calT{+}_\omega)$  is a merge of $\calC{}_{out}(G_{out}, \calT{}_{out})$ and of $\calC{-}_\Delta(R_\omega,{\cal J}_\omega)$. Further, $S^+_\omega$ has been obtained by applying the construction of Lemma~\ref{le:split} to c-graph $\calC{+}_{con}(H^+_\omega,\calT{+}_\omega)$ (step~\ref{algo:outerStar1}) and cycle $\rho$.  Therefore, each connected component of $\mu$ in $G_{out}$ corresponds to a star vertex of $S^+_\omega$. Hence, we have that the vertices of $c'$ and $c''$ are also connected in $G^*$ via vertices of $G_{out}$ and $G_{in}$ belonging to cluster $\mu$.

Now, we prove that all the vertices in $G_{out}(\mu)$ are connected in~$G^*(\mu)$.

Consider two connected components  $c'$ and $c''$ of $G_{out}(\mu)$.
Observe that, as shown above, each connected component of $\mu$ in $G_{out}$ corresponds to a star vertex of $S^+_\omega$. 
Recall that $\calC{\#}$ is c-connected. Therefore, the star vertices of $S^+_\omega$ corresponding to $c'$ and $c''$ are connected via other star vertices of $S^+_\omega$ and vertices of $G_{in}$ belonging to cluster $\mu$ in~$G^\#(\mu)$. Hence, the vertices of $c'$ and $c''$ are also connected in $G^*$ via vertices of $G_{out}$ belonging to connected components of $\mu$ corresponding to star vertices of $S^+_\omega$ and vertices of $G_{in}$ belonging to cluster $\mu$ in~$G^*(\mu)$.
\end{proof}

%





($\Leftarrow$) Suppose that $\calC{}(G,\calT{})$ is c-planar. We show that Procedure  {\sc OuterCheck} and {\sc InnerCheck} succeed. Since $\calC{}(G,\calT{})$ is c-planar, there exists a super c-graph $\calC{*}(G^*,\calT{})$ of $\calC{}$ such that $G^*$ is planar and $\calC{*}$ is c-connected, by Theorem~\ref{th:characterization}. By using the construction of Lemma~\ref{le:split} on c-graph $\calC{*}$,  we can obtain a cycle-star $S^-$ whose universal cycle is $\rho$ that represents the connectivity of clusters inside $\rho$ in $\calC{*}$.
The proof of this direction follows from the next claim.
\begin{restatable}{claimx}{reversedirection}\label{cl:reversedirection}
Procedures  {\sc OuterCheck} and~{\sc InnerCheck} succeed for $S^-_i=S^-$.
\end{restatable}

\begin{proof}
Procedure  {\sc OuterCheck} succeeds if, for a cycle separator $\rho$ of $G$ selected at step~\ref{alg:selectRho} of the algorithm, there exists a cycle-star $S^-_i$ whose universal cycle is $\rho$ such that the corresponding c-graph $\calC{+}(G^+_i,\calT{+}_i)$, constructed at steps~\ref{alg:innerStar1}, \ref{alg:innerStar2}, and~\ref{alg:innerStar3} of the algorithm, is c-planar. 
Recall that, cycle-star $S^-$ has the following properties: 
\begin{inparaenum} 
\item Cycle $\rho$ is the universal cycle of $S^-$ and bounds the outer face of $S^-$, and
\item for each star vertex $v$ of $S^-$, the neighbours of $v$ belong to the same cluster $\mu \in {\cal K}^-$vertex $v$ belongs to.
\end{inparaenum}
Since, steps~\ref{alg:generateS} and~\ref{alg:innerStar1} construct all c-graphs $\calC{-}(S^-_i,{\cal K}^-_i)$ with the above properties, when $S^-_i=S^-$ we are guaranteed to compute c-graph $\calC{-}(S^-,{\cal K}^-)$. 
First, observe that
the c-graph obtained by merging c-graphs $\calC{-}(S^-,{\cal K}^-)$ and $\calC{+}_\rho(G^+,\calT{+})$ is c-planar, since $S^-$ 
has been obtained by applying the construction of Lemma~\ref{le:split} to a super c-connected c-planar c-graph $\calC{*}$ of $\calC{}$. This 
 together with Remark~\ref{rm:split} imply that c-graph $\calC{+}(G^+_i,\calT{+}_i)$ is c-planar. Thus, the invocation of {\sc TestCP} on $\calC{+}(G^+_i,\calT{+}_i)$ at step~\ref{alg:recursionOut} will return {\tt YES}. Hence, Procedure {\sc OuterCheck} succeeds.

Procedure  {\sc InnerCheck} succeeds if, there exists a c-graph $\calC{-}(G^-_i,\calT{-}_i)$, constructed at steps~\ref{algo:outerStar1}, \ref{algo:outerStar2}, and~\ref{algo:outerStar3} of the algorithm, that is c-planar. 
By Theorem~\ref{th:characterization}, a c-graph $\calC{-}(G^-_i,\calT{-}_i)$ is c-planar if and only if there exists a super c-graph $\calC{'}(G',\calT{-}_i)$ of $\calC{-}(G^-_i,\calT{-}_i)$ such that $G'$ is planar and $\calC{'}$ is c-connected. 
As Procedure  {\sc OuterCheck} succeeds, the c-graph $\calC{+}(G^+_i,\calT{+}_i)$ corresponding to $S^-$ is c-planar. Therefore, Procedure  {\sc OuterCheck} provides us with a c-planar c-connected c-graph $\calC{+}_{con}(H^+_i,\calT{+}_i)$ (see steps~\ref{alg:recursionOut} and~\ref{algo:outerStar1}) that is a super c-graph of $\calC{+}(G^+_i,\calT{+}_i)$. 
Consider the c-graph $\calC{+}(S^+_i,{\cal K}^+_i)$ constructed at step~\ref{algo:outerStar1} by applying the construction of Lemma~\ref{le:split} to $\calC{+}_{con}$. 
Observe that, the c-graph obtained by merging $\calC{+}(S^+_i,{\cal K}^+_i)$ and $\calC{-}_\Delta(R^-_i,{\cal J}^-_i)$ is a c-connected c-planar c-graph. 
This is due to the fact that, since $R^-_i$ is internally triangulated, there exists no edge in the interior of $\rho$ in $H^+_i$ that belongs to $H^+_i$ and does not belong to $R^-_i$, that is, no candidate saturating edges connect two vertices in the interior of $\rho$ in $\calC{+}_{con}$.
Since $S^+_i \subseteq R^+_i$, we also have that the c-graph obtained by merging $\calC{+}_\Delta(R^+_i,{\cal J}^+_i)$ (constructed at step~\ref{algo:outerStar2}) and $\calC{-}_\Delta(R^-_i,{\cal J}^-_i)$ is a c-connected c-planar c-graph.
 Also, since each of the vertices added to obtain $R^-_i$ from $S^-$ belongs to a different cluster and since the edges added to internally triangulate $S^-$ do not connect vertices of the same cluster, we have that the c-graph obtained by merging $\calC{+}_\Delta(R^+_i,{\cal J}^+_i)$ and $\calC{-}(S^-,{\cal K}^-)$ is also a c-connected c-planar c-graph.

Let $\cal A$  be the subgraph of $G^*$ induced by the edges in the interior and on the boundary of $\rho$ in $\calC{*}$.
Since $S^-$ exactly represents the cluster connectivity of $\cal A$, 
 the c-graph obtained by merging $\calC{+}_\Delta(R^+_i,{\cal J}^+_i)$ and $\cal A$ is also a c-connected c-planar c-graph. The fact that, such a c-graph is a super c-graph of $\calC{-}(G^-_i, \calT{-}_i)$ shows that $\calC{-}(G^-_i,\calT{-}_i)$ is c-planar. Hence, Procedure {\sc InnerCheck} succeeds. 
 \end{proof}


\medskip

We are finally ready to present the main result of the section.

\begin{restatable}{theorem}{subexponential}\label{th:subexponential}
The {\sc C-Planarity} problem can be solved in $2^{O(\sqrt{\ell n}\cdot{\log{n}})}$ time for $n$-vertex c-graphs with maximum face size $\ell$.
\end{restatable}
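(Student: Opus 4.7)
The plan is to bound the running time of Algorithm {\sc TestCP}; its correctness has already been established by Claims~\ref{claim:directdirection} and~\ref{cl:reversedirection}, so only the time analysis remains. First I would apply Lemma~\ref{lem:3connected} as a linear-time preprocessing step, reducing the input to an equivalent $2$-connected c-graph on $O(n)$ vertices with maximum face size $\kappa = O(\ell)$; this only changes the hidden constants in what follows and makes {\sc TestCP} applicable.

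The heart of the argument is the branching analysis of a single recursive call. On an $n'$-vertex subinstance, Miller's cycle-separator theorem~\cite{m-fsscs2pg-86} supplies a cycle separator $\rho$ with $|V(\rho)| \leq 2\sqrt{\kappa n'} = O(\sqrt{\ell n'})$ that splits the graph into two sides, each containing at most $2n'/3$ vertices. Every cycle-star $S^-_i$ enumerated at step~\ref{alg:generateS} is determined by a partition of $V(\rho)$ into groups of vertices belonging to a common cluster (the cluster assignment of each star vertex is then forced by the adjacency rule of step~\ref{alg:innerStar1}), so the number of cycle-stars is bounded by the Bell number $B_{|V(\rho)|} = 2^{O(\sqrt{\ell n'}\log n')}$. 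For each cycle-star, the algorithm performs $O(1)$ recursive calls, each on an instance of size at most $2n'/3 + 3|V(\rho)|$, which is at most $\alpha n'$ for a fixed $\alpha < 1$ once $n' \geq 324\,\ell$; smaller instances fall into the base case, handled by brute force in time depending only on $\ell$.

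I would then solve the resulting recurrence
\[
T(n) \;\leq\; 2^{c\sqrt{\ell n}\log n}\cdot T(\alpha n) \;+\; n^{O(1)},
\]
where the additive polynomial overhead covers computing the separator, enumerating cycle-stars, triangulating $R^{\pm}_i$, and performing the merging operations. Unrolling to depth $d = O(\log n)$ yields total cost $2^{E(n)}\cdot n^{O(1)}$ with
\[
E(n) \;=\; \sum_{i=0}^{d-1} c\sqrt{\ell\,\alpha^{i}n}\,\log(\alpha^{i}n) \;\leq\; c\sqrt{\ell n}\log n \cdot \sum_{i \geq 0}\alpha^{i/2} \;=\; O(\sqrt{\ell n}\log n),
\]
because the geometric series converges. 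This gives $T(n) = 2^{O(\sqrt{\ell n}\log n)}$, as claimed.

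The main obstacle I expect is the bookkeeping: confirming that each recursive subinstance really has size at most $\alpha n'$ after accounting for the shared cycle $\rho$ and the $O(|V(\rho)|)$ vertices introduced by the triangulation that produces $R^{\pm}_i$; verifying that the Bell-number upper bound still dominates once the simultaneous planarity and cluster-consistency constraints on cycle-stars are imposed; and checking that the per-node polynomial overhead is genuinely absorbed into the final exponential factor. Once these points are verified, the stated subexponential bound follows.
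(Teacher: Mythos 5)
Your proposal is correct and follows essentially the same route as the paper: preprocessing via Lemma~\ref{lem:3connected}, a cycle separator of size $O(\sqrt{\ell n})$ with the $2n/3$ balance guarantee, enumeration of cycle-stars on the separator, and the same recurrence solved by unrolling to depth $O(\log n)$. The only notable difference is that you bound the number of cycle-stars by a Bell number $2^{O(\sqrt{\ell n}\log n)}$, whereas the paper observes they correspond to non-crossing partitions of $V(\rho)$ and uses the Catalan bound $4^{O(\sqrt{\ell n})}$ --- a sharper count, but your coarser bound still yields the stated $2^{O(\sqrt{\ell n}\cdot\log n)}$ running time.
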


\begin{proof}
Given an $n$-vertex c-graph $\calC{}(G,\calT{})$ with maximum face size $\ell$, by Lemma~\ref{lem:3connected}, we can construct in linear time a $2$-connected, in fact $3$-connected, c-graph $\calC{'}$ equivalent to $\calC{}$. Therefore, we can apply Algorithm {\sc TestCP} to $\calC{'}$ to determine whether $\calC{}$ is c-planar.

We now argue about the running time.

Since $G'$ is $2$-connected and since, by Lemma~\ref{lem:3connected},
$|V(G')| = O(|V(G)|)$ and the maximum face size $\ell'$ of $G'$ is $O(\ell)$, we can construct a cycle separator $\rho$ of $G$ of size $s(n)=O(\sqrt{\ell n})$ that separates the vertices in the interior of $\rho$ from the vertices in the exterior of $\rho$ in such a way that both such sets contain at most $\frac{2n}{3}$ vertices~\cite{m-fsscs2pg-86}.
Also, since all cycle-stars whose universal cycle is $\rho$ have size $O(s(n))$ and
the augmentations at steps \ref{alg:innerStar2} and~\ref{algo:outerStar2} can be done by introducing at most $s(n)$ new vertices, graphs $G_i^+$ (step~\ref{alg:recursionOut}) and $G^-_i$ (step~\ref{alg:recursionIn}) have $O(\frac{2n}{3} + O(s(n)))$ size. Further, by construction,  $G_i^-$ and $G_i^+$ are $2$-connected and their maximum face size is $\ell'$; thus, the cycle separators of $G_i^-$ and $G_i^+$ have size bounded by $s(|V(G_i^-)|)$ and by $s(|V(G_i^+)|)$, respectively.

Moreover, observe that each cycle-star $S^-_i \in \cal S$ satisfying the properties described at step~\ref{alg:generateS} can be constructed in $O(s(n))$ time. Also, each cycle-star $S^-_i$ is in one-to-one correspondence with a non-crossing partition of a set containing $s(n)$ elements. This is due to the fact that each vertex of $\rho$ is incident to at most a star vertex of $S^-_i$ and that, by the planarity of $S^-_i$, the neighbours of any two star vertices do not alternate along $\rho$. The number of all such partitions is expressed by the Catalan number $C_{s(n)} \leq 4^{s(n)}$. 

The non-recursive running time $f(n)$ is bounded by the time taken by steps~\ref{alg:selectRho} and~\ref{algo:outerStar1}, that is, $O(n)$ time. In fact, the cycle-separator of an $n$-vertex graph can constructed in $O(n)$ time~\cite{m-fsscs2pg-86}. Testing whether a cycle is a cluster-separator can be done by performing a visit of the graph to detect if there exist a cluster whose vertices lie inside and outside of $\rho$, but not along $\rho$; this can clearly be done in $O(n)$ time. Finally, applying the construction of Lemma~\ref{le:split} to obtain a cycle-star only requires finding the connected components of each cluster inside (or outside) $\rho$ and their respective connections to cycle $\rho$, which can be done in $O(n)$ time by performing a DFS-visit of $G^-$ (or $G^+$).

By the above arguments, the running time of Algorithm {\sc TestCP} is expressed by by the following recurrence:

\begin{equation}\label{eq:runnintime}
T(n)=2 C_{s(n)} \bigg(T\Big(\frac{2n}{3} + O\big(s(n)\big)\Big) + f(n)\bigg)
\end{equation} 

Since equation~(\ref{eq:runnintime}) solves to $T(n)= 2^{O(\sqrt{\ell n}\cdot{\log{n}})}$ for $s(n) = O(\sqrt{\ell n})$, $C_{s(n)} = 4^{s(n)}$, $f(n)=O(n)$, the statement follows.
%
\end{proof}

In the next section, we show how to adapt algorithm {\sc TestCP} to obtain an $\mathsf{XP}$ algorithm with parameter $h$ for generalized $h$-simply nested graphs, which extend simply-nested graphs with bounded face size.

\subsection{Generalized $h$-Simply-Nested Graphs}\label{se:simplynested}


\begin{wrapfigure}[12]{R}{.31\textwidth}
    \centering
    \vspace{-9.4mm}
    \includegraphics[page=1, width=.3\textwidth]{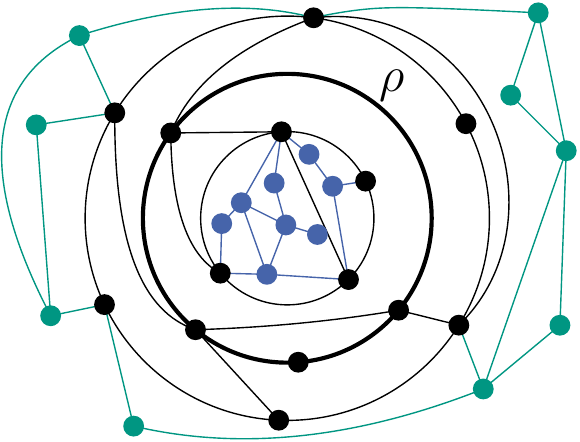}
    \caption{A generalized \mbox{$6$-simply-nested} graph. 
    }
    \label{fig:hgeneralized}
\end{wrapfigure}
A plane graph is \mbox{\em $h$-simply-nested} if it consists of nested cycles of size at most $h$ and of edges only connecting vertices of the same cycle or vertices of adjacent cycles; refer to Fig.~\ref{fig:hgeneralized}. 
We extend the class of $h$-simply-nested graphs to the class of \mbox{\em generalized $h$-simply-nested graphs}, by allowing the inner-most cycle to contain a plane graph consisting of at most $2h$ vertices in its interior and the outer-most cycle to 
contain a plane graph consisting of at most $2h$ vertices in its exterior. 
See~\cite{Cimikowski-90} for a related graph class, in which the vertices in the interior of the inner-most cycle can only form a tree, there exist no other vertices in the exterior of the outer-most cycle, and chords are not allowed for the remaining cycles.

Let $G$ be a generalized $h$-simply-nested plane graph with $n> 5h$ vertices. We have the following simple observation about the structure of $G$; refer to Fig.~\ref{fig:hgeneralized}.

\begin{observation}\label{obs:generalize-structure}
Graph $G$ contains a cycle $\rho$ with $|V(\rho)| \leq h$ that separates $G$ into two generalized $h$-simply-nested graphs $G^+$ and $G^-$ with $|V(G^+)| \leq \frac{n}{2}$ and $|V(G^-)| \leq \frac{n}{2}$ such that
$G^+$ ($G^-$) does not contain any vertex in the exterior (interior) of its outer-most cycle (inner-most cycle). Further, such a cycle can be computed in $O(n)$ time.
\end{observation}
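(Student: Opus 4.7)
\begin{sketch}
My plan is to exploit the ``onion'' structure of $G$. Its vertex set partitions into the vertices of a sequence of nested cycles $\rho_1 \supset \rho_2 \supset \cdots \supset \rho_k$, each of size at most $h$, together with an exterior plane subgraph $E$ of at most $2h$ vertices outside $\rho_1$ and an interior plane subgraph $I$ of at most $2h$ vertices inside $\rho_k$. Since $|V(E)| + |V(I)| \leq 4h$ and each cycle has size at most $h$, the assumption $n > 5h$ forces $k \geq 2$, so at least two cycle layers are available to cut through.

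The core of the argument is to pick the right cycle. For $0 \leq j \leq k$, I would define $f(j)$ to be the number of vertices strictly enclosed by $\rho_j$ (with the convention $f(0) := n - |V(E)|$). Then $f$ decreases monotonically from $f(0) \geq n - 2h$ to $f(k) \leq 2h$, and $f(j-1) - f(j) = |V(\rho_j)| \leq h$ at each step. The bound $n > 5h$ guarantees $f(0) > n/2 > f(k)$, so the smallest index $j^*$ with $f(j^*) \leq n/2$ is well defined, and the bounded step size additionally yields $f(j^*) > n/2 - h$.

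I would then set $\rho := \rho_{j^*}$, which satisfies $|V(\rho)| \leq h$ automatically. Let $G^+$ be the subgraph of $G$ induced by the vertices on or inside $\rho$ and $G^-$ the subgraph induced by the vertices on or outside $\rho$. Each is obtained by truncating the onion at $\rho$ and therefore inherits the generalized $h$-simply-nested structure: its cycles form a suffix or prefix of $\rho_1, \ldots, \rho_k$, its intra- and inter-layer edges are inherited from $G$, and the extreme missing on one side---$G^+$ has no exterior vertices, $G^-$ has no interior vertices---trivially respects the $\leq 2h$ cap. The balance follows from the choice of $j^*$: $|V(G^+)| = f(j^*) + |V(\rho)|$ and $|V(G^-)| = n - f(j^*)$, both bounded by $n/2 + O(h)$, the additive $O(h)$ slack being absorbed into the $n/2$ bound by the $n > 5h$ hypothesis. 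Finally, the construction runs in $O(n)$ time, since peeling the nested cycles from the plane embedding and scanning the prefix sums to locate $j^*$ are both linear.

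The step I expect to require the most care is the structural verification that the two sides $G^+$ and $G^-$ genuinely remain in the class of generalized $h$-simply-nested graphs after the cut: no illegal cross-cycle adjacencies may be introduced, and the newly exposed extreme layer on each side must respect the $\leq 2h$ cap. Both points, however, reduce to the fact that $\rho_{j^*}$ already appears as a legal layer of $G$'s onion, so the restriction to each side is a purely substructural one and no structural invariants of the class are disturbed.
\end{sketch}
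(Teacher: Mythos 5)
The paper records this as an unproved ``simple observation,'' so there is no authorial proof to compare against; your layer-peeling argument --- enumerate the onion layers, track the prefix counts $f(j)$ of vertices strictly enclosed by $\rho_j$, and cut at the first layer $j^*$ where $f(j^*)\le n/2$ --- is exactly the natural argument the authors are gesturing at, and your structural verification (each side inherits a prefix/suffix of the nesting, with an empty exterior/interior on the cut side and the old $\le 2h$ cap on the other extreme) and the $O(n)$ running time are fine.

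The one step that does not work as written is the final size bound. You define both $G^+$ and $G^-$ to contain $\rho$, correctly compute $|V(G^+)|=f(j^*)+|V(\rho)|\le n/2+h$ and $|V(G^-)|=n-f(j^*)<n/2+h$, and then assert that the additive $O(h)$ slack ``is absorbed into the $n/2$ bound by the $n>5h$ hypothesis.'' It is not: $n/2+h\le n/2$ is false for every $h\ge 1$, and no lower bound on $n$ changes that; indeed, with $\rho$ counted on both sides the literal claim can never hold, since then $|V(G^+)|+|V(G^-)|=n+|V(\rho)|>n$. Two repairs are available, and you should state one explicitly. Either prove the bound as $n/2+O(h)$, which is all that is actually used downstream (the recurrence derived from this observation is $T(n)=2C_h\,(T(n/2+O(h))+O(n))$), or charge the cycle to at most one side: your minimal choice of $j^*$ gives $f(j^*-1)>n/2$, so the strict interior of $\rho_{j^*}$ has $f(j^*)\le n/2$ vertices and the strict exterior has $n-f(j^*-1)<n/2$ vertices, and the $\le n/2$ bound holds verbatim once the vertices of $\rho$ are not double-counted. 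A minor further point: define $G^\pm$ by deleting everything drawn strictly inside (resp.\ outside) $\rho$ rather than as vertex-induced subgraphs, so that a chord of $\rho$ embedded on the opposite side does not end up in the wrong piece of the plane decomposition.
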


By Observation~\ref{obs:generalize-structure}, we can use a cycle separator of size at most $h$ in \mbox{Algorithm} {\sc TestCP} to test the c-planarity of a c-graph whose underlying graph is a generalized $h$-simply-nested plane graph $G$ (instead of a cycle separator of size $O(\sqrt{\ell n})$, where $\ell$ is the maximum face size of $G$). Observe that, graphs $G_i^+$ and $G_i^-$ obtained at steps~\ref{alg:innerStar3} and~\ref{alg:innerStar3} of the algorithm, respectively, also belong to the family of generalized $h$-simply-nested plane graphs. Therefore, Observation~\ref{obs:generalize-structure} also holds for such graphs. Altogether, we obtain the following recurrence relation for the running-time:

\begin{equation}\label{eq:runnintime-generalized}
T(n)=2 C_{h} \bigg(T\Big(\frac{n}{2} + O(h)\Big) + O(n)\bigg)
\end{equation}

Equation~(\ref{eq:runnintime-generalized}) immediately implies the following theorem. 

\begin{theorem}
The {\sc C-Planarity} problem can be solved in $n^{O(h)}$ time for 
\mbox{$n$-vertex} c-graphs whose underlying graph is a generalized $h$-simply-nested~graph.
\end{theorem}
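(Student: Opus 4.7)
The plan is to treat the theorem as essentially a corollary of the recurrence relation~(\ref{eq:runnintime-generalized}), but with one genuine structural verification required before the recurrence can be applied. The setup is to run Algorithm {\sc TestCP} on the input c-graph $\calC{}(G,\calT{})$, replacing the invocation of the planar cycle separator theorem of~\cite{m-fsscs2pg-86} (which would give size $O(\sqrt{\ell n})$) with Observation~\ref{obs:generalize-structure}, which supplies a cycle separator $\rho$ of size at most $h$ whose removal balances the two sides to at most $n/2$ vertices each, in $O(n)$ time. For the base case, when $n \le 5h$, we can test c-planarity in time $2^{O(h)}$ by brute-forcing over all subsets of candidate saturating edges as in the original base case of {\sc TestCP}.

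The main obstacle, and really the only nontrivial step, is verifying that the two graphs $G^+_i$ and $G^-_i$ constructed at steps~\ref{alg:innerStar3} and~\ref{algo:outerStar3} of {\sc TestCP} remain generalized $h$-simply-nested, so that Observation~\ref{obs:generalize-structure} can be re-invoked in each recursive call with the same parameter $h$. I would argue this directly from the construction: the augmentation $R^-_i$ (respectively, $R^+_i$) is an internally triangulated cycle-star on $\rho$, adding only $O(h)$ new vertices, all incident only to $\rho$. Since $\rho$ bounds the outer face of $R^-_i$ and an inner face of $G^+$, the new vertices sit inside the inner-most cycle of $G^+_i$ (and symmetrically for $G^-_i$), which is precisely what the ``generalized'' relaxation in the definition permits, provided we observe that the number of added vertices is $O(h)$ and can be absorbed into the $2h$ allowance; if not, one treats $\rho$ itself as the new inner-most (outer-most) cycle of the nested structure, with the $O(h)$ added vertices forming the interior (exterior) plane graph.

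Once this structural invariant is in place, the counting and the running time follow verbatim from the analysis of {\sc TestCP} in the proof of Theorem~\ref{th:subexponential}: the number of cycle-stars $S^-_i$ in $\cal S$ is bounded by the Catalan number $C_{|V(\rho)|} \le C_h \le 4^h$ (non-crossing partitions of the $\le h$ vertices of $\rho$), each iteration of {\sc OuterCheck} and {\sc InnerCheck} performs one recursive call, and the non-recursive work per node is $O(n)$. This yields exactly the recurrence
\[
T(n) \;=\; 2\,C_h\,\Bigl(T\bigl(\tfrac{n}{2} + O(h)\bigr) + O(n)\Bigr).
\]

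Finally, to solve the recurrence, I would observe that the recursion depth is $O(\log n)$, and at each level the branching factor is $2 \cdot 4^h = 2^{O(h)}$, so the total number of recursive subproblems is $\bigl(2^{O(h)}\bigr)^{O(\log n)} = n^{O(h)}$. The non-recursive work per subproblem is $O(n)$, and the input-size parameter at depth $d$ is at most $n/2^d + O(h)$, so the total time is $n^{O(h)} \cdot O(n) = n^{O(h)}$, as claimed. Correctness is inherited directly from the correctness of {\sc TestCP} established in Claims~\ref{claim:directdirection} and~\ref{cl:reversedirection}, since we have only changed the choice of cycle separator and not any logical step of the algorithm.
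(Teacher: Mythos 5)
Your proposal is correct and follows essentially the same route as the paper: replace the $O(\sqrt{\ell n})$ cycle separator in {\sc TestCP} by the size-at-most-$h$ separator of Observation~\ref{obs:generalize-structure}, observe that the subinstances $G^+_i$ and $G^-_i$ remain generalized $h$-simply-nested (the $2h$ allowance in the definition exactly absorbs the at most $h$ star vertices plus at most $h$ triangulation vertices placed inside/outside $\rho$), and solve recurrence~(\ref{eq:runnintime-generalized}), whose solution you work out explicitly where the paper just states it. The only nit is your claim that the base case runs in $2^{O(h)}$ time: brute-forcing over subsets of candidate saturating edges on an $O(h)$-vertex instance is really $2^{\mathrm{poly}(h)}$, a detail the paper's own (even terser) treatment glosses over as well.
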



\section{An MSO$_2$ formulation for C-Planarity}\label{se:courcelle}

In this section, we show that the property of a c-graph of admitting a c-planar drawing can be expressed 
in extended monadic second-order ($\msotwo$) logic. We use this result and the fact that graph properties definable in $\msotwo$ logic can be verified in linear time on graphs of bounded treewidth, by Courcelle's Theorem~\cite{Courcelle90}, to build an FPT algorithm for testing \mbox{the c-planarity of embedded flat c-graphs.}  

{\em First-order graph logic} deals with formul{ae} whose variables represent the vertices and edges of a graph. 
{\em Second-order graph logic} also allows quantification over $k$-ary relations defined on the vertices and edges.
{\em MSO$_2$ logic} only allows quantification over elements and {\em unary relations}, that is, sets of vertices and edges.
Given a graph $G$ and an $\msotwo$ formula $\phi$, we say that $G$ {\em models } $\phi$, denoted by $G \models \phi$, if the logic statement expressed by $\phi$ is satisfied by the vertices, edges, and sets of vertices and edges in~$G$. We will apply Courcelle's theorem not to the underlying graph $G$ of the clustered planarity instance, but to the supergraph $G^\diamond$ of $G$ that includes all the candidate saturating edges of $G$. This will allow us to quantify over sets of candidate saturating edges, but in exchange we must show that $G^\diamond$, and not just $G$, has low treewidth (Lemma~\ref{lem:3connected}). 

Let $H$ be a graph and let $E_1,E_2 \subseteq E(H)$. The following logic predicates can be expressed in $\msotwo$ logic (refer, e.g., to~\cite{BannisterE14,cfklmpps-pa-15} for their detailed \mbox{formulation):}
\begin{itemize}[$\diamond$]
\item $\planar_H(E_1,E_2):=$ the subgraph $(V(H), E_1 \cup E_2)$ of $H$ is planar, and
\item $\connected_H(U,E_1,E_2):=$ vertices in $U \subseteq V(H)$ are connected by edges in~$E_1 \cup E_2$.
\end{itemize}

Let $\calC{}(G,\calT{})$ be a c-graph and let $E^*$ be the set of all the candidate saturating edges of $\calC{}$. By Property(\ref{con:3}) of Theorem~\ref{th:characterization}, c-graph $\calC{}$ admits a c-planar drawing if and only if 
there exists a super c-graph $\calC{'}(G',\calT{})$ of $\calC{}$ such that $G'$ is planar and $\calC{'}$ is c-connected.
%
%
Testing Property(\ref{con:3}) amounts to determining the existence of a set $E^+ \subseteq E^*$ such that
\begin{inparaenum}[(i)]
\item the subgraph $G'$ of $G^\diamond$ obtained by adding the edges in $E^+$ to $G$ is planar and 
\item graph $G'(\mu)$ is connected, for each \mbox{cluster $\mu \in \calT{}$.}
\end{inparaenum}

We remark that in an $\msotwo$ formula it is possible to refer to given subsets of vertices or edges of a graph, provided that the elements of such subsets can be distinguished from the elements of other subsets by equipping them with labels from a constant finite set~\cite{ALS91}. Therefore, in our formulae we use the unquantified variables $V_i$ to denote the set of vertices belonging to cluster $\mu_i$, for each disconnected cluster $\mu_i \in \calT{}$, $E^*$ to denote the set consisting of all the candidate saturating edges of~$\calC{}$, and $E_G$ to denote $E(G)$.

Let $c$ be the number of disconnected clusters in $\calT{}$. We have the formula: 
%

$$
\cplanar_{{\cal C}(G,{\cal T})}
{\footnotesize
\equiv 
\exists (E^+ \subseteq E^*) \Big [\planar_{G^\diamond}(E_G,E^+) \wedge
\bigwedge^c_{i=1} \connected_{G^\diamond}(V_i,E_G,E^+)\Big ]}
$$

It is easy to see that formula $\cplanar_{{\cal C}(G,{\cal T)}}$ correctly expresses Condition(\ref{con:3}) of Theorem~\ref{th:characterization} only if $G$ admits a unique combinatorial embedding (up to a flip). In fact, if $G$ has more than one embedding, formula $\cplanar_{{\cal C}(G,{\cal T)}}$ might still be satisfiable after a change of the embedding, as formula $\planar_{G^\diamond}(E_G,E^+)$ models the planarity of an abstract graph rather than the planarity of a combinatorial embedding. We formalize this fact in the following lemma.

\begin{lemma}\label{le:courcelle}
Let $\calC{}(G,\calT{})$ be a c-graph such that $G$ has a unique combinatorial embedding and let $\calC{\diamond
}(G^\diamond,\calT{\diamond})$ be the c-graph obtained by augmenting $\calC{}$ with all its candidate saturating edges. 
Then, $\calC{}$ is c-planar \mbox{iff $G^\diamond \models \cplanar_{{\cal C}(G,{\cal T})}$.}
\end{lemma}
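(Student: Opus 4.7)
The plan is to prove the biconditional by leveraging Theorem~\ref{th:characterization} together with the hypothesis that $G$ has a unique combinatorial embedding (up to a flip). The forward direction essentially translates Condition~(\ref{con:3}) of that theorem into an MSO witness $E^+$, while the backward direction must bridge the gap between the abstract planarity predicate $\planar_{G^\diamond}$ and the embedded notion of c-planarity required by the definition.

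For the forward direction ($\Rightarrow$), I will assume $\calC{}$ is c-planar. By Condition~(\ref{con:3}) of Theorem~\ref{th:characterization} there exists a super c-graph $\calC{'}(G',\calT{})$ of $\calC{}$ with $G'$ planar and $\calC{'}$ c-connected, realized by a drawing preserving the embedding of $G$. The set $E^+ = E(G')\setminus E(G)$ consists of edges drawn inside faces of $G$. By discarding any added edge whose endpoints lie in different clusters or in the same connected component of $G(\mu)$ (which is unnecessary for c-connectedness and does not affect planarity), and by rerouting the remaining edges within each face so that they join the designated reference vertices of the relevant face-components, one can assume without loss of generality that $E^+ \subseteq E^*$. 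This $E^+$ then witnesses the existential quantifier: $(V(G^\diamond), E_G\cup E^+)=G'$ is planar, so $\planar_{G^\diamond}(E_G,E^+)$ holds; and c-connectivity of $\calC{'}$ yields $\connected_{G^\diamond}(V_i,E_G,E^+)$ for every disconnected cluster $\mu_i$.

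For the backward direction ($\Leftarrow$), suppose $G^\diamond \models \cplanar_{{\cal C}(G,{\cal T})}$, and let $E^+\subseteq E^*$ be the witness. Set $G':=(V(G),E_G\cup E^+)$; the formula guarantees that $G'$ is planar as an abstract graph and that every cluster induces a connected subgraph in $G'$. Here the uniqueness assumption is essential: since $G$ admits a unique combinatorial embedding up to a flip, every planar embedding of the supergraph $G'$ must restrict on $E_G$ to the prescribed embedding of $G$. Moreover, by definition of $E^*$ each edge in $E^+$ joins two vertices that share a face of $G$, so in any such planar embedding of $G'$ it must be routed inside that shared face. Consequently, $G'$ admits a planar embedding extending the embedding of $G$, and the resulting $\calC{'}(G',\calT{})$ is a c-connected super c-graph of $\calC{}$ satisfying Condition~(\ref{con:3}) of Theorem~\ref{th:characterization}, whence $\calC{}$ is c-planar.

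I expect the main obstacle to lie in the forward direction: one must argue carefully that the saturating edges of an arbitrary c-planar super c-graph can be exchanged for a subset of the canonical set $E^*$, which requires a rerouting argument within each face to move endpoints to the chosen reference vertices. The backward direction, while being the conceptually critical use of the unique-embedding hypothesis, is technically lighter once the interaction between $E^*$ and the face structure of $G$'s fixed embedding is made precise via Whitney's theorem.
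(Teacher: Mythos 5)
Your proof is correct and takes essentially the same route as the paper's (implicit) justification of this lemma: the formula expresses Condition~(\ref{con:3}) of Theorem~\ref{th:characterization}, and the unique-embedding hypothesis is exactly what lets abstract planarity of $(V(G),E_G\cup E^+)$ stand in for planarity respecting the fixed embedding of $G$. Two minor remarks: the rerouting argument in your forward direction is unnecessary, since by the paper's definition a super c-graph is obtained by adding a subset of the \emph{candidate saturating} edges, so $E^+\subseteq E^*$ holds automatically; and in the backward direction an edge of $E^+$ is only forced to lie in \emph{some} face of the induced (unique, up to a flip) embedding of $G$, not necessarily the face used to define it, which is all the conclusion needs.
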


Since changes of embedding are not allowed in our context, as we aim at testing the c-planarity of a c-graph given a prescribed embedding, we combine Lemmata~\ref{lem:3connected} and~\ref{le:courcelle}, and then {invoke Courcelle's Theorem to obtain the following main result.}


\begin{theorem}\label{th:courcelle}
The {\sc C-Planarity} problem can be solved in $f(\overline{\emw},c) O(n)$ time for 
$n$-vertex c-graphs with $c$ disconnected clusters and whose \mbox{underlying graph} has embedded-width $\overline{\emw}$, \mbox{where $f$ is a computable function.}
\end{theorem}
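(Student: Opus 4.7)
The plan is to combine the three ingredients established earlier: the structural reduction of Lemma~\ref{lem:3connected}, the $\msotwo$ characterization of Lemma~\ref{le:courcelle}, and Courcelle's Theorem. Starting from an input c-graph $\calC{}(G,\calT{})$ with $n$ vertices, embedded-width $\overline{\emw}(G)$, and $c$ disconnected clusters, I would first apply Lemma~\ref{lem:3connected} to compute, in $O(n)$ time, an equivalent c-graph $\calC{*}(G^*,\calT{*})$ with $|V(G^*)|=O(n)$ whose underlying graph $G^*$ is $3$-connected. By Whitney's theorem, $G^*$ then admits a unique combinatorial embedding up to a flip, which is precisely the hypothesis under which Lemma~\ref{le:courcelle} is stated. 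Moreover, the same lemma guarantees that the supergraph $G^\diamond$ obtained from $G^*$ by adding all candidate saturating edges satisfies $\tw(G^\diamond)=O(\overline{\emw}(G))$.

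Next, I would construct the $\msotwo$ formula $\cplanar_{\calC{*}(G^*,\calT{*})}$ displayed above. Its length is governed by the number of conjuncts of the form $\connected_{G^\diamond}(V_i,E_G,E^+)$, one per disconnected cluster of $\calC{*}$. Note that the reduction of Lemma~\ref{lem:3connected} only inserts new vertices into existing clusters (and inside faces where those clusters already appear), so it cannot turn a connected cluster into a disconnected one and the number of disconnected clusters of $\calC{*}$ remains at most $c$. The sets $V_i$, the original edge set $E_G$, and the candidate saturating-edge set $E^*$ are all made accessible to the formula by equipping vertices and edges of $G^\diamond$ with labels from a constant-size alphabet, as is standard in order to refer to designated subsets inside an $\msotwo$ sentence~\cite{ALS91}.

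Finally, I would feed the labeled graph $G^\diamond$ and the formula $\cplanar_{\calC{*}}$ to Courcelle's Theorem~\cite{Courcelle90}: the property is decidable in time $g(\tw(G^\diamond),|\cplanar_{\calC{*}}|)\cdot O(|V(G^\diamond)|)$ for some computable $g$. Since $\tw(G^\diamond)=O(\overline{\emw}(G))$, $|\cplanar_{\calC{*}}|$ depends only on $c$, and $|V(G^\diamond)|=O(n)$, the whole procedure runs in $f(\overline{\emw}(G),c)\cdot O(n)$ time for a suitable computable function $f$. Correctness follows by chaining the equivalences: $\calC{}$ is c-planar iff $\calC{*}$ is c-planar (Lemma~\ref{lem:3connected}) iff $G^\diamond\models \cplanar_{\calC{*}}$ (Lemma~\ref{le:courcelle}).

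The main conceptual obstacle has already been absorbed into Lemma~\ref{lem:3connected}: without the $3$-connectivity reduction, Courcelle's theorem would only certify planarity of an abstract graph, which may disagree with the prescribed embedding, and without the simultaneous treewidth bound on $G^\diamond$ one could not afford to quantify over subsets of candidate saturating edges. Given those two properties, the remaining work reduces to verifying that the cluster-count bookkeeping is honest and that all necessary vertex/edge subsets can be exposed as labels, both of which are routine.
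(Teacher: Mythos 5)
Your proposal is correct and follows essentially the same route as the paper: reduce via Lemma~\ref{lem:3connected} to a $3$-connected equivalent instance, apply Lemma~\ref{le:courcelle} to $G^\diamond$ with the formula $\cplanar$, and invoke Courcelle's Theorem, with $\tw(G^\diamond)=O(\overline{\emw})$ and formula length $O(c)$. The only minor imprecision is quoting Courcelle's running time as linear in $|V(G^\diamond)|$ alone; the paper also bounds $|E(G^\diamond)|=O(\ell^2 n)=O(\overline{\emw}^2 n)$ (there are $O(\ell^2 n)$ candidate saturating edges and $\ell=O(\overline{\emw})$), since the theorem is linear in the whole structure size, but this extra factor is absorbed into $f(\overline{\emw},c)$ and does not affect the stated bound.
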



\newcommand{\runningtimethfour}{
We now argue about the running time. 
By Lemma~\ref{lem:3connected}, c-graph $\calC{*}(G^*,\calT{*})$ can be constructed in $O(n)$ time.
Let $\kappa$ be the maximum face size of $G^*$.
The number of candidate saturating edges of $\calC{*}$ is $O(\kappa^2 n)$. By Lemma~\ref{lem:3connected},
$\kappa = O(\ell)$.
Hence, we can augment $\calC{*}(G^*,\calT{*})$ to obtain $\calC{\diamond}(G^\diamond,\calT{\diamond})$ in $O(\ell^2 n)$~time.

By Courcelle's theorem~\cite{Courcelle90}, it is possible to verify whether $G^\diamond \models \phi$ in $g(tw(G^\diamond),len(\phi))O(|V(G^\diamond)|+|E(G^\diamond)|)$ time, where $g$ is a computable function.
By Lemma~\ref{lem:3connected}, $|V(G^\diamond)|=|V(G^*)|=O(n)$ and $\tw(G^\diamond) = \overline{\emw}(G)$. Also, by the discussion above, $|E(G^\diamond)| = O(\ell^2 n)$ and, by definition of embedded-width, $\ell = O(\overline{\emw})$; thus, $|E(G^\diamond)| = O(\overline{\emw}^2 n)$.
Further, formula $\phi$ can be constructed in time proportional to its length $len(\phi)$, which is $O(c)$.
Therefore, the overall running time can be expressed as $f(\overline{\emw},c) O(n)$, where \mbox{$f$ is a computable function.}}

\begin{proof}
To test that $\calC{}(G,\calT{})$ admits a c-planar drawing with the given embedding we proceed as follows.
First, we apply Lemma~\ref{lem:3connected} to obtain a c-graph $\calC{*}(G^*,\calT{*})$ that is equivalent to $\calC{}(G,\calT{})$ such that $G^*$ is $3$-connected. 
Note that, the \mbox{$3$-connectivity} of $G^*$ implies that it has a unique combinatorial embedding (up to a flip)~\cite{Whitney33}. 
Then, we construct formula $\phi = \cplanar_{{\calC{*}}(G^*,{\calT{*})}}$ and the super c-graph $\calC{\diamond
}(G^\diamond,\calT{\diamond})$ of $\calC{*}$ obtained by augmenting $\calC{*}$ with all its candidate saturating edges. Finally, we use Courcelle's Theorem to test whether $G^\diamond \models \phi$. The correctness immediately follows \mbox{from Lemmata~\ref{lem:3connected} and~\ref{le:courcelle}.}
%
%
%

\runningtimethfour{}\end{proof}

\section{Conclusions and Open Problems}\label{se:conclusions}

In this paper, we provide subexponential-time, $\mathsf{XP}$, and FPT \mbox{algorithms} to test {\sc C-Planarity} of fairly-broad classes of 
c-graphs.

\setcounter{footnote}{2}
Several interesting questions arise from this research:
\begin{inparaenum}[(1)]
	\item Can our results be generalized from flat to non-flat 
	c-graphs?
	\item Is there a fully polynomial-time algorithm to test {\sc C-Planarity} of c-graphs whose underlying graph is a generalized $h$-simply-nested graph? 
	\item Are there interesting parameters of the underlying graph such that {\sc C-Pla\-na\-ri\-ty} is FPT with respect to a single one of them (e.g., outerplanarity index, maximum face size, notable graph width parameters)?
	\item Are there interesting parameters of the c-graph such that \mbox{\sc C-Planarity} is FPT with respect to a single one of them (e.g., number of clusters, number of vertices of the same cluster incident to the same face\footnote{This question has also been previously asked by Chimani {\em et al.}~\cite{cdfk-atcpefcg-14}}, maximum distance between two faces containing vertices of the same cluster)?
\end{inparaenum}
%
\clearpage{}
\bibliographystyle{splncs03}
\bibliography{bibliography}


%
%

\end{document}